\newtheorem{thm}{Theorem}[section]
\newtheorem{dfn}[thm]{Definition}
\newtheorem{lem}[thm]{Lemma}
\newtheorem{prp}[thm]{Proposition}
\theoremstyle{definition}\newtheorem{rmk}[thm]{Remark}
\theoremstyle{definition}\newtheorem{example}[thm]{Example}
\newcommand\mc{\mathcal}
\newcommand\ol{\overline}
\newcommand\Dslash{{D\mkern-11.5mu/\,}}
\DeclareMathOperator{\tr}{Tr}
\DeclareMathOperator{\Ad}{Ad}
\DeclareMathOperator{\ad}{ad}
\newcommand\cinf[1]{C^{\infty}(#1)}
\newcommand\spectraltriple{(\Gamma(M,B), L^2(M,B\otimes S),D_B)}
\newcommand\spectraltripleJeven{(\Gamma(M,B), L^2(M,B\otimes S),D_B,J,\gamma_B)}
\renewcommand\inf\infty
\renewcommand\epsilon\varepsilon
\def\bA{\mathbb{A}}
\def\A{\mathcal{A}}
\def\C{\mathbb{C}}
\def\ch{\textup{ch}}
\def\H{\mathcal{H}}
\DeclareMathOperator{\ind}{Index}
\def\top{\textup{top}}
\begin{document}
\title{The noncommutative geometry of Yang--Mills fields}
\author{Jord Boeijink and Walter D. van Suijlekom}
\address{Institute for Mathematics, Astrophysics and Particle Physics, Faculty of Science, Radboud University Nijmegen, Heyendaalseweg 135, 6525AJ Nijmegen, The Netherlands}
\email{j.boeijink@math.ru.nl; waltervs@math.ru.nl}

\begin{abstract}
We generalize to topologically non-trivial gauge configurations the description of the Einstein--Yang--Mills system in terms of a noncommutative manifold, as was done previously by Chamseddine and Connes. Starting with an algebra bundle and a connection thereon, we obtain a spectral triple, a construction that can be related to the internal Kasparov product in unbounded KK-theory. In the case that the algebra bundle is an endomorphism bundle, we construct a $PSU(N)$-principal bundle for which it is an associated bundle. The so-called internal fluctuations of the spectral triple are parametrized by connections on this principal bundle and the spectral action gives the Yang--Mills action for these gauge fields, minimally coupled to gravity. Finally, we formulate a definition for a topological spectral action.
\end{abstract}

\date{\today}
\maketitle
\section{Introduction}
One of the main applications of noncommutative geometry to theoretical physics is in deriving the Yang--Mills action from purely geometrical data \cite{ChamseddineConnes}. In fact, the full Lagrangian of the Standard Model of high-energy physics -- including the Higgs potential -- can be derived by starting with a noncommutative Riemannian spin manifold \cite{CCM07}. 

It is interesting to confront this with the geometrical approach to Yang--Mills theory (\textit{cf}. \cite{Ati79}), using the language of principal fiber bundles and connections thereon. It turns out that the noncommutative geometrical description of \cite{CC97} corresponds to topologically trivial $SU(N)$-principal bundles. It is the goal of this paper to generalize this to topologically non-trivial gauge configurations. As a matter of fact, we derive the Yang--Mills action for gauge fields defined on a non-trivial principal bundle from a noncommutative Riemannian spin manifold, that is, from a spectral triple. 
Since spectral triples -- and more generally, (unbounded) KK-theory -- form a natural setting for doing index theory, our construction has potential applications to {\it e.g.} the study of moduli spaces of instantons in noncommutative geometry. 

Our construction will naturally involve algebra bundles and connections thereon, for which -- after some preliminaries -- we will give their definition in Section \ref{sect:algebrabundles}. There, we will also construct a spectral triple from this data. The above connection plays the same role as it does in the internal Kasparov product in KK-theory and we will explore this relation in some detail in Section \ref{sect:KK}.

In the case that the algebra bundle has typical fiber $M_N(\C)$ -- {\it i.e.} it is an endomorphism bundle -- it is possible to construct a $PSU(N)$-principal bundle, with the algebra bundle as an associated bundle. We will explore this case in Section \ref{sect:ym}. The so-called internal fluctuations of the above spectral triple are parametrized by connections on this principal bundle. Finally, we show that the spectral action principle applied to the spectral triple gives the Yang--Mills action on a topologically non-trivial $PSU(N)$-principal bundle, minimally coupled to gravity.

In the concluding section, we sketch the definition of a so-called topological spectral action.

\subsection*{Acknowledgements}
We thank Simon Brain for a careful proofreading of the manuscript, as well as valuable suggestions and remarks.

\section{Preliminaries}
\subsection{Spectral triples and the spectral action principle}
Spectral triples, as they are introduced in \cite{Connes} are at the heart of noncommutative geometry. In fact, they generalize $spin^c$-structures to the noncommutative world. 
\begin{dfn}[\cite{Connes}]
A \emph{spectral triple} $(\mathcal{A}, \mathcal{H}, D$) is given by an involutive algebra $\mathcal{A}$ represented faithfully on the Hilbert space
$\mathcal{H}$, together with a densely defined, self-adjoint operator $D$ on $\mathcal{H}$ with the following properties:
\begin{itemize}
\item The resolvent operators $(D - \lambda)^{-1}$ are compact on $\mathcal{H}$ for all $\lambda \notin \mathbb{R}$,
\item For all $a \in \mathcal{A}$ the operator $[D,a]$ extends to a bounded operator defined on $\mathcal{H}$ .
\label{dfn:spectraltriple}
\end{itemize}
The triple is said to be \emph{even} if there exists an operator $\Gamma$ on
$\mathcal{H}$ with the properties
\begin{equation*}
\Gamma^*=\Gamma, \quad \Gamma^2=1, \quad \Gamma D + D \Gamma = 0, \quad \Gamma a - a \Gamma = 0.\label{eq:eventriple}
\end{equation*}
If such an operator does not exist, then the triple is said to be \emph{odd}.
\end{dfn}

\begin{example}
\label{ex:canonicaltriple}
The motivating example for the definition of a spectral triple is formed by the \emph{canonical triple}
\begin{equation*}
(\cinf{M}, L^2(M,S), \Dslash)
\end{equation*}
associated to any compact Riemannian spin-manifold $M$.\footnote{Here and in what follows we work in the category of smooth manifolds.}  The Hilbert space $L^2(M,S)$ consists of square-integrable sections of the spinor bundle $S \to M$.
The operator $\Dslash$ is the Dirac operator on the spinor bundle. For even dimensional spin-manifolds there exists a grading $\gamma$ on $L^2(M,S)$.
\end{example}

A spectral triple can have additional structure such as reality.
\begin{dfn}[\cite{ConnesMarcolli}, Definition 1.124]
A \emph{real structure} on a spectral triple $(\mc{A}, \mc{H},D)$ is an anti-unitary operator $J: \mc{H} \rightarrow \mc{H}$, with the property that
\begin{equation*}
J^2 = \epsilon, \quad JD = \epsilon' DJ, \quad and \quad J\gamma = \epsilon'' \gamma J, \text{ (even} \text{ case}),
\end{equation*}
where the numbers $\epsilon$,$\epsilon'$, $\epsilon''$ are $\pm 1$. Moreover, there are the following relations between $J$ and elements of $\mc{A}$:
\begin{equation}
[a,b^0] = 0, \qquad [[D,a], b^0] = 0 \text{ for all } a,b \in \mathcal{A}.
\label{eq:jreq1}
\end{equation}
where $b^0 = J b^* J^{-1} \text{ for all } b \in \mathcal{A}$. A spectral triple $(\mathcal{A},\mathcal{H},D)$ endowed with a real structure $J$ is called a \emph{real spectral
triple}.
\label{dfn:realstructure}
\end{dfn}
The signs $\epsilon, \epsilon'$ and $\epsilon''$ determine the so-called KO-dimension (modulo 8) of the real spectral triple (see \cite{ConnesGravity} for more details).

\begin{example}
\label{ex:canonical}
For a spin-manifold and a given spinor bundle $S$ there exists an operator $J_M$ -- called charge conjugation -- on $L^2(M,S)$ such that
\begin{equation*}
(\cinf{M}, L^2(M,S), \Dslash, J_M)
\end{equation*}
is a real spectral triple. Here the $KO$-dimension is equal to the dimension of the spin-manifold $M$. For more details on the construction of $J_M$ the reader is referred to {\it e.g.} \cite{Varilly}. When the dimension $n$ is even, the inclusion of the grading operator $\gamma$ of Example \ref{ex:canonicaltriple} to the datum
\begin{equation}
(\cinf{M}, L^2(M,S), \Dslash, J_M, \gamma)
\label{eq:canonicaltripleJeven}
\end{equation}
yields a real and even spectral triple.
\end{example}

\begin{rmk}
Note that the existence of a real structure $J$ turns $\mc{H}$ into a bimodule over $\mc{A}$. Indeed, condition (\ref{eq:jreq1}) implies that the right action of $\mc{A}$ on $\mc{H}$ defined by
\begin{equation*}
\xi a := Ja^*J^* \xi, \quad (\xi \in \mc{H}, a \in \mc{A})
\end{equation*}
commutes with the left action of $\mc{A}$.
\end{rmk}

\subsubsection{Spectral triples and gauge theories}
In this subsection we show how noncommutative spectral triples naturally give rise to gauge theories, following \cite{ConnesGravity}. First of all, note that the most natural notion of equivalence of (unital) noncommutative
($C^*$-)algebras is Morita equivalence (\cite{Rieffel}). A unital algebra $\mc{A}$  is {\it  Morita equivalent} to a unital algebra $\mc{B}$ if and
only if there exists a $\mc{B}-\mc{A}$-module $\mc{E}$ which is finitely generated and projective as an $\mc{A}$-module such that $\mc{B} = \text{End}_{\mc{A}} \mc{E}$. Commutative algebras are Morita equivalent if and only if they are isomorphic, justifying this notion of equivalence for noncommutative algebras.

If $(\mc{A}, \mc{H}, D,J,\gamma)$ is a real and even spectral triple and $\mc{B}$ is a unital algebra Morita equivalent to $\mc{A}$, then there is natural way to construct a real and even spectral triple for the algebra $\mc{B}$. If this is done for the case $\mc{B} = \mc{A}$ (any algebra is Morita equivalent to itself) through the module $\mc{E} = \mc{A}$, the obtained spectral
triple is of the form
\begin{equation*}
(\mc{A}, \mc{H}, D_A := D + A + \epsilon' JAJ^{-1}),
\end{equation*}
where $A = \sum_j a_j [D,b_j]$ for $a_j, b_j \in \mc{A} $ is a bounded self-adjoint operator on $\mc{H}$ (see \cite{ConnesMarcolli}, Section 10.8 for more details). It is a
straightforward verification that this is again a real and even spectral triple. Thus we get another spectral triple consisting of the same algebra and Hilbert space but with
the operator $D$ fluctuated by an element $A$. Note that the terms $A$ and $JAJ^{-1}$ cancel each other if $\mc{A}$ is a commutative algebra (this follows by a small
calculation using the compatibility conditions of $J$ with $D$ and the action of $\mc{A}$). The occurrence of fluctuations of the operator $D$ by Morita equivalences is
therefore a purely noncommutative phenomenon. The element $A$ will be interpreted as the \emph{gauge potential}.

The \emph{gauge group} of the triple $(\mc{A}, \mc{H}, D)$ is the subgroup $\text{Inn}(\mc{A})$ of $*$-automorphism of $\mc{A}$ consisting of all automorphisms of the form $a \mapsto uau^*$ where $u \in \mc{A}$ satisfies $uu^*=u^*u=1$ (\cite{ConnesMarcolli}, Section 9.9). This inner automorphism group acts naturally on the constituents of a spectral triple as an intertwiner. The gauge potential transforms accordingly as $A \mapsto uAu^* + u[D,u^*]$. The action of the gauge group on the Hilbert space is given by  $\psi \mapsto uJuJ^* \psi$, where $\psi \in \mc{H}$ and $u \in \mc{U}(A)$.

\subsubsection{Spectral action principle}
\label{ssct:spectralaction}
Associated to a spectral triple we have a gauge group, a gauge potential and gauge transformations and in this way a spectral triple forms the setting of a gauge theory. To obtain the dynamics of the theory, the spectral action principle \cite{ConnesGravity,ChamseddineConnes} is used to calculate an action from the spectral triple. The action consists of two parts: the first part is a fermion part, which is defined by
\begin{equation*}
S_f[\psi, A] = \langle \psi, D_A \psi \rangle,
\end{equation*}
where $\langle \cdot, \cdot \rangle$ denotes the inner product on $\mc{H}$. Note that the fermionic action depends on the gauge potential $A$ in $D_A$ but that it is invariant under gauge transformations. The other part of the action is the bosonic action which is defined by
\begin{equation}
\label{eq:spectralaction}
S_b[A] =\tr(f(D_A / \Lambda)),
\end{equation}
where $\tr$ denotes the trace in $\H$, $f$ is a suitable cut-off function with $\Lambda > 0$. Note that, just as for the fermionic action, the expression of $S_b$ is invariant under the transformations $D_A \mapsto uJuJ^* D_A Ju^*Ju^*$ for $u \in \mc{A}$ unitary. 

\subsection{Einstein--Yang--Mills theories and spectral triples}
\label{ssct:EYM}
Chamseddine and Connes showed in \cite{ChamseddineConnes} that
Yang--Mills gauge theory over a compact Riemannian spin-manifold $M$ can be obtained from a spectral triple built from the canonical triple associated to this manifold and a matrix algebra. In this subsection we will briefly review their results and we will relate it to the description of gauge theories in terms of principal fiber bundles. Let us first recall how such fiber bundles enter gauge theories.


\begin{dfn}
\label{dfn:connection1form}
Let $G$ be a matrix Lie group and let $P$ be a principal $G$-bundle. A \emph{connection} $\omega$ assigns to each local trivialization $\phi_U: \pi^{-1}(U) \rightarrow U \times G$ a $\mathfrak{g}$-valued one-form $\omega_u$ on $U$. If $\phi_V$ is another local trivialization and $g_{uv}: U\cap V \rightarrow G$ is the transition function from $(U,\phi_U)$ to $(V,\phi_V)$, then we require the following transformation rule for $\omega$:
\begin{equation}
\omega_u = g_{uv}^{-1} dg_{uv} + g^{-1}_{uv} \omega_v g_{uv}. \label{eq:76}
\end{equation}
\end{dfn}
More generally, that is in the case of arbitrary Lie groups, the gauge potential is defined as a global $\mathfrak{g}$-valued connection 1-form on $P$ satisfying some extra
conditions. In the case of matrix Lie groups this definition coincides with Definition \ref{dfn:connection1form} (see for instance \cite{Bleecker} for more details). The
local one-forms $\omega_u$ are the gauge potentials one encounters in physics. 
\begin{dfn}
\label{dfn:gaugetheory}
A \emph{gauge theory} with group $G$ over a manifold $M$ consists of a principal $G$-bundle together with a connection 1-form $\omega$ on $P$. 
The connection 1-form $\omega$ on $P$ is also called the \emph{gauge potential}.
If $G=(P)SU(N)$ then the gauge theory is called a $(P)SU(N)$-Yang--Mills theory.
\end{dfn}

We now briefly summarize the results of \cite{ChamseddineConnes} that obtained Yang--Mills theory on a manifold $M$ from a well-chosen spectral triple. From now on the manifold $M$ is assumed to be a compact $4$-dimensional spin-manifold.

Consider the following objects:
\begin{gather}
\mathcal{A} = C^{\infty}(M) \otimes M_N(\mathbb{C}) 
, \qquad
\mathcal{H} = L^2(M,S) \otimes M_N(\mathbb{C}), \qquad
D = \Dslash_M \otimes 1, \nonumber \\
J= J_M  \otimes ( \cdot)^*, \qquad
\gamma = \gamma_5 \otimes 1.
\label{eq:conclusion0}
\end{gather}
The bundle $S$ is the spinor bundle whose fibers are isomorphic to $\mathbb{C}^4$ as in Example \ref{ex:canonicaltriple} and the operator $\Dslash$ is the Dirac operator on
the bundle $S$. 
Observe that this triple forms a spectral triple, being the product of the canonical triple $(\cinf{M}, L^2(M,S), \Dslash)$ and the matrix algebra $M_N(\mathbb{C})$ that acts on itself by left multiplication. We will describe this product structure in more detail in Subsection \ref{ssct:kkproduct}. 

The spectral triple in Equation \eqref{eq:conclusion0} is real and even  using the fact that the canonical triple \eqref{eq:canonicaltripleJeven} is real and even. 
Let us now determine the fluctuated Dirac operator $D_A = D + A + \epsilon'JAJ^*$ for this spectral triple. The fact that $\epsilon' =1$ in 4 dimensions implies that 
\begin{eqnarray}
A + \epsilon' JAJ^* = \gamma^{\mu} A_{\mu} + J \gamma^{\mu} A_{\mu} J^*.
\label{eq:bpp}
\end{eqnarray}
In even dimensions one has
\begin{equation*}
J_M \gamma^{\mu} J_M^* = -\gamma^{\mu},
\end{equation*}
and if we use that left-multiplication by $JA_{\mu}J^*$ is right multiplication by $A^*_{\mu}$, Equation ($\ref{eq:bpp}$) turns into
$A + JAJ^* = \gamma^{\mu} \cdot \text{ad}(A_{\mu})$, since $A$ is self-adjoint. Thus the fluctuated Dirac operator is of the form:
\begin{equation}
D_A = D + i \gamma^{\mu} \mathbb{A}_\mu
\end{equation}
where $\bA_{\mu} = -i\ad A_{\mu}$. The self-adjointness of $A$ implies that $\bA_\mu$ is an anti-hermitian one-form. Since $A$ acts in the adjoint
representation the $u(1)$-part drops out and we effectively have a $su(N)$-gauge potential. 

It was shown in \cite{ChamseddineConnes} that the spectral action applied to the above spectral triple \eqref{eq:conclusion0} describes the Einstein--Yang--Mills system. It contains the Einstein-Hilbert action and higher-order gravitational terms, as well as the Yang--Mills action for a global
$su(N)$-valued 1-form $A_{\mu}$. This is in line with the interpretation of the fluctuation $A$ as a gauge potential. Comparing this with the definition of a
$PSU(N)$-Yang--Mills theory as in Definition \ref{dfn:gaugetheory}, the fact that the gauge potential $A_{\mu}$ is globally an $su(N)$-valued 1-form means that this corresponds to a gauge theory with a trivial principal $PSU(N)$-bundle $P$. 
The goal of this paper is to generalize the spectral triple \eqref{eq:conclusion0} in such a way that it determines a topologically nontrivial $PSU(N)$-gauge theory. 

\section{Algebra bundles and spectral triples}
\label{sect:algebrabundles}

In this section we will generalize the above spectral triple to obtain a gauge theory on a non-trivial $PSU(N)$-bundle. The
important observation here is that in the trivial case we started with the algebra $\cinf{M} \otimes M_N(\mathbb{C})$ which is precisely the algebra of sections of a trivial
$M_N(\mathbb{C})$-bundle over $M$. This suggests for the non-trivial case that the algebra in the spectral triple is given by $\Gamma(M,B)$, where $B$ is an arbitrary locally
trivial algebra bundle where the fiber is the $*$-algebra $M_N(\mathbb{C})$. In fact, we will construct such a real and even spectral triple $(\mc{A}, \mc{H}, D, J, \gamma)$ where the algebra $\mc{A}$ is isomorphic to $\Gamma(M,B)$. This allows for a derivation of Yang--Mills theory for a gauge connection on a non-trivial principal fiber bundle in the next section. 



\subsection{Definition of algebra bundles}
\label{ssct:definitionalgebrabundle}
In this paper we take the following definition of an algebra bundle.
\begin{dfn}
An \emph{algebra bundle} $B$ is a vector bundle together with a vector bundle homomorphism $\mu: B \otimes B \rightarrow B$ such that for all $x \in M$:
\begin{equation}
\mu(p_x \otimes(\mu(q_x \otimes r_x )) = \mu(\mu(p_x \otimes q_x) \otimes r_x), \quad \forall p_x,q_x, r_x \in B_x,
\end{equation}
inducing an associative algebra structure on each of the fibers of $B$ by setting $p_x  \cdot q_x = \mu(p_x
\otimes q_x)$ for two section $p$ and $q$ evaluated at $x \in M$. 

If $B_1$ and $B_2$ are two algebra bundles, then a map $\phi: B_1 \rightarrow B_2$ is called an \emph{algebra bundle morphism} if it is a vector bundle
morphism such that the restriction $f_{|(B_1)_x}: (B_1)_x \rightarrow (B_2)_x$ is an homomorphism of algebras.

An algebra bundle $B$ is called an {\rm involutive} or {\rm $\ast$-algebra bundle}, if there exists in addition an algebra bundle homomorphism $J: B \rightarrow \ol{B}^{op}$ such that $J^2=1$,\footnote{Here $\ol{B}^{op}$ is as a vector bundle conjugate to $B$ and has
opposite multiplication in the fibers} giving each fiber the structure of an involutive algebra by setting $p^*_x = J(p_x)$.

If $B_1, B_2$ are two $*$-algebra bundles, then an $*$-\emph{algebra bundle homomorphism} is a vector bundle homomorphism $f: B_1 \rightarrow B_2$ such that the restriction
$f_{|(B_1)_x}: (B_1)_x \rightarrow (B_2)_x$ is a $*$-algebra homomorphism for every base point $x \in M$. 

Let $AlgB(M)$ ($AlgB^*(M)$) denote the category whose objects are
all (involutive) algebra bundles (over $M$), and where the morphisms are all (involutive) algebra bundle morphisms.
\end{dfn}

\begin{rmk}
We note here that we do not require that the algebra in each fiber is the same. However, the way we introduced the associative algebra structures on the fibers guarantees
that the product of two smooth sections is again smooth.
This turns $\Gamma(M,B)$ into an associative algebra.
\end{rmk}

In general, the space of smooth sections of a vector bundle on $M$ is a module over $C^\infty(M)$. In the case of algebra bundles, this action is compatible with the multiplication in the fiber. Thus, if $B$ is an (involutive) algebra bundle, then $\Gamma(M,B)$ is a finitely generated (involutive) module algebra over $C^\infty(M)$. Recall that an $R$-module algebra is 
an $R$-module $\mc{A}$ with an associative multiplication $\mc{A} \times \mc{A} \mapsto \mc{A}: (a,b) \mapsto
ab$ which is $R$-bilinear:
\begin{equation*}
r(ab)=(ra)b= a(rb) \quad \forall a,b \in \mc{A}, r \in R.
\end{equation*}
An $R$-module algebra is called involutive if there exists a map 
$*: \mc{A} \rightarrow \mc{A}$ such that 
\begin{gather*}
(ab)^* = b^* a^* ; \qquad 
(a + b)^* = a^* + b^* ; \qquad
(r a)^* = r^* a^* ; \qquad \qquad (r,s \in R, a,b \in \mc{A}).
\end{gather*}

Recall the well-known Serre--Swan Theorem for (complex) vector bundles over compact manifolds \cite{Swan}.
\begin{thm}[Serre--Swan \cite{Swan}]
For every vector bundle $E$ over a compact manifold $M$ the space of sections $\Gamma(M,E)$ is a finitely generated projective $\cinf{M}$-module. The association $\Gamma$ of the space of sections to the vector bundle $E$ establishes an 
equivalence of categories between the complex vector bundles over $M$ and the category of finitely generated projective $\cinf{M}$-modules.
\end{thm}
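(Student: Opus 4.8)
The plan is to prove the two assertions in turn, the first serving as a lemma for the second. For the finite generation and projectivity of $\Gamma(M,E)$, I would first exploit compactness of $M$ to choose a finite open cover $U_1,\dots,U_k$ over each of which $E$ trivializes, together with a subordinate partition of unity $\{\rho_i\}$. Picking a local frame on each $U_i$ and multiplying its members by $\rho_i$ (extending by zero) produces finitely many global smooth sections that span $E_x$ at every $x \in M$, since at least one $\rho_i$ is nonzero there. These sections assemble into a surjective bundle morphism $\phi : M \times \C^n \to E$. Choosing a Hermitian metric on the trivial bundle then lets me split $\phi$, identifying $E$ with the image of an orthogonal projection and exhibiting a complement $F$ with $E \oplus F \cong M \times \C^n$. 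Applying $\Gamma$ and using $\Gamma(M, M\times\C^n) = \cinf{M}^n$ gives $\Gamma(M,E) \oplus \Gamma(M,F) \cong \cinf{M}^n$, so $\Gamma(M,E)$ is a direct summand of a free module, i.e. finitely generated projective.

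For the equivalence of categories I would verify that $\Gamma$ is fully faithful and essentially surjective. Essential surjectivity inverts the previous construction: given a finitely generated projective module $P$, write $P \oplus Q \cong \cinf{M}^n$ and let $e = e^2 \in M_n(\cinf{M})$ be the idempotent projecting onto $P$. Viewing $e$ as a smooth family of idempotents $e(x) \in M_n(\C)$, I would define $E \subseteq M \times \C^n$ fibrewise by $E_x := e(x)\C^n$ and verify that this is a genuine vector subbundle, after which $\Gamma(M,E) \cong e\,\cinf{M}^n \cong P$ recovers the module.

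For full faithfulness, faithfulness follows because global sections span every fibre (from the construction above), so a bundle morphism is determined by its induced map on sections. Fullness is the claim that every $\cinf{M}$-linear map $\varphi : \Gamma(M,E) \to \Gamma(M,E')$ arises from a bundle morphism. The essential point is locality: if a section $s$ vanishes on a neighbourhood of $x$, then choosing a bump function $\rho$ supported there with $\rho(x) = 1$ gives $\rho s = 0$, whence $\rho\,\varphi(s) = \varphi(\rho s) = 0$ and so $\varphi(s)(x) = 0$. A small refinement using a local frame together with bump functions shows that $\varphi(s)(x)$ depends in fact only on $s(x)$, producing a pointwise linear, smoothly varying map $E_x \to E'_x$, i.e. a bundle morphism inducing $\varphi$.

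The main obstacle is the essential-surjectivity step, specifically proving that the fibrewise image of the smooth idempotent $e$ is locally trivial. The rank $\tr e(x)$ is integer-valued and continuous, hence locally constant, which is the first ingredient; the genuine work is constructing local frames, for which I would use that idempotents close to a fixed one in $M_n(\C)$ are conjugate by an invertible element depending smoothly on $e$, allowing $e(x)\C^n$ to be trivialized near each point. The remaining verifications — that $\phi$ splits, that $\Gamma$ is a functor, and that the two constructions are mutually inverse up to natural isomorphism — are routine once locality and local triviality are in hand.
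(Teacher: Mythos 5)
Your proposal is correct and is the standard proof of the Serre--Swan theorem; note that the paper itself does not prove this statement but imports it from \cite{Swan}, so there is no in-paper argument to diverge from. All three of your ingredients are the right ones: the partition-of-unity construction of a surjection $M\times\C^n\to E$ split by a Hermitian metric (giving finite generation and projectivity), the idempotent picture $E_x=e(x)\C^n$ with local triviality coming from local constancy of $\operatorname{rank}e(x)=\operatorname{tr}e(x)$ together with smooth conjugacy of nearby idempotents (giving essential surjectivity), and the locality of $\cinf{M}$-linear maps (giving fullness). The one step you flag as a ``small refinement'' deserves to be made explicit, since it is the crux of fullness: passing from ``$\varphi(s)(x)=0$ whenever $s$ vanishes near $x$'' to ``$\varphi(s)(x)$ depends only on $s(x)$'' requires that any section with $s(x)=0$ be written as a finite sum $\sum_i f_i s_i$ with $f_i(x)=0$, which one gets by expanding $s$ in a bump-damped local frame and absorbing the remainder (supported away from a neighbourhood of $x$) into such a sum as well. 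This is precisely Lemma \ref{lem:sectionzero} (quoted from Nestruev), which is the same mechanism the paper then reuses to lift multiplication and involution to the fibers in its extension of the theorem to algebra bundles; so your argument dovetails exactly with the machinery the paper builds on top of the classical statement.
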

We now extend this result to arrive at an equivalence between ($*$)-algebra bundles and finitely generated projective $\cinf{M}$-module algebras (with involution). The idea is that the $\cinf{M}$-linear multiplicative structure on $\mc{P}$, where $\mc{P}$ is now considered as the space of sections of some vector bundle $B$ (which is unique up to
isomorphism), induces a product on the fibers $B_x$ such that $(s\cdot t)(x) = s(x) \cdot t(x)$. 
%
The next lemma is crucial for lifting the multiplication structure on $\Gamma(M,B)$ to the fibers of $B$.
\begin{lem}\cite[Lemma 11.8b]{Nestruev}
\label{lem:sectionzero}
Let $\pi: B \rightarrow M$ be a vector bundle. Suppose $s$ is a section with $s(x)=0$ for some $x \in M$. Then there exist functions $f_i$ with $f_i(x)=0$ and sections $s_i
\in \Gamma(M,B)$ so that $s$ can be written as a finite sum $s = \sum_i f_i s_i$.
\end{lem}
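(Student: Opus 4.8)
The plan is to reduce the global statement to a local computation in a trivializing chart and then to globalize by means of a bump function. First I would invoke the local triviality of the vector bundle: choose an open neighbourhood $U \ni x$ over which $B$ is trivial, together with a smooth local frame $e_1, \dots, e_k \in \Gamma(U, B)$ spanning each fibre $B_y$ for $y \in U$. Over $U$ the section $s$ then admits an expansion $s = \sum_{j=1}^k g_j e_j$ with $g_j \in \cinf{U}$, and the hypothesis $s(x)=0$ forces $g_j(x)=0$ for every $j$, since $e_1(x), \dots, e_k(x)$ constitute a basis of $B_x$.

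The second step is to convert these purely local data into genuinely global functions and sections. Here I would fix a bump function $\phi \in \cinf{M}$ with $\supp \phi \subset U$ and $\phi \equiv 1$ on a smaller neighbourhood $V \ni x$. The products $\phi e_j$ and $\phi g_j$, initially defined only on $U$, then extend by zero to smooth objects on all of $M$; write $\tilde e_j := \phi e_j \in \Gamma(M,B)$ and $\tilde g_j := \phi g_j \in \cinf{M}$. Because $\phi(x)=1$, one still has $\tilde g_j(x) = g_j(x) = 0$.

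Finally I would split $s$ as $s = \phi^2 s + (1-\phi^2) s$ and treat the two summands separately. On $U$ one computes $\phi^2 s = \sum_j (\phi g_j)(\phi e_j) = \sum_j \tilde g_j\, \tilde e_j$, and since both sides vanish outside $\supp \phi$ this identity propagates to all of $M$. For the complementary term, the function $1-\phi^2$ is globally smooth and vanishes on $V$, so in particular $(1-\phi^2)(x)=0$. Setting $f_{k+1} := 1-\phi^2$ and $s_{k+1} := s$ yields
\begin{equation*}
s = \sum_{j=1}^{k} \tilde g_j\, \tilde e_j + f_{k+1}\, s_{k+1},
\end{equation*}
a finite sum of the required shape, every coefficient function vanishing at $x$. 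The only delicate point is the globalization carried out in the second step: the frame $e_j$ and the coefficients $g_j$ are defined merely on $U$, so they cannot be used directly, and the purpose of the factor $\phi^2$ is exactly to force the products to extend smoothly by zero, while the remainder $(1-\phi^2)s$ is absorbed into a single additional term whose coefficient already vanishes near $x$.
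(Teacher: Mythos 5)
Your proof is correct, and since the paper gives no proof of this lemma (it simply cites Nestruev, Lemma 11.8b), there is nothing to diverge from: your bump-function argument --- localizing via a frame, globalizing with $\phi^2$, and absorbing the remainder $(1-\phi^2)s$ into one extra term --- is the standard proof of this fact and the one the cited reference uses.
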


Suppose that $\mc{P}$ is a finitely generated projective $\cinf{M}$-module which is at the same time an (involutive) $C^{\inf}(M)$-module algebra. The Serre--Swan Theorem gives a vector bundle $B$ so that $\mc{P} \simeq \Gamma(M,B)$ as $C^\infty(M)$-modules. We will now step-by-step introduce an (involutive) algebra bundle structure on $B$. 
\begin{prp}
For $x \in M$, let $p,q \in B_x$ be given and suppose $s,t \in \mc{P}$ are such that $p = s(x)$ and $q = t(x)$. There exists a well-defined fiber multiplication $\mu(p \otimes q) : =st(x)$ turning $B$ into an algebra bundle. Consequently, we have $s t (y) = s(y) t(y)$ for all $y \in M$ and $s, t \in
\Gamma(M,B)$. 
\label{prp:dfnproduct}
\end{prp}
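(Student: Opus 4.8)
The plan is to show that the prescription $\mu(p\otimes q):=st(x)$ does not depend on the choice of sections $s,t\in\mc{P}$ representing the fiber elements $p=s(x)$ and $q=t(x)$, and then to verify that the resulting fiberwise multiplication is bilinear and associative, so that it genuinely defines an algebra bundle structure on $B$ in the sense of the earlier definition. The only genuine content is the well-definedness; everything else is inherited pointwise from the $\cinf{M}$-bilinear and associative module-algebra structure on $\mc{P}\simeq\Gamma(M,B)$.

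First I would address well-definedness, which is where Lemma \ref{lem:sectionzero} does the work. Suppose $s,s'\in\mc{P}$ both satisfy $s(x)=s'(x)=p$, and likewise $t,t'$ with $t(x)=t'(x)=q$. Writing $st(x)-s't'(x)=\big(s-s'\big)t(x)+s'\big(t-t'\big)(x)$, it suffices to show that if a section $w\in\Gamma(M,B)$ vanishes at $x$, then $wt(x)=0$ and $s'w(x)=0$ for any sections. By Lemma \ref{lem:sectionzero}, $w=\sum_i f_i w_i$ with $f_i\in\cinf{M}$ satisfying $f_i(x)=0$ and $w_i\in\Gamma(M,B)$. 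Using $\cinf{M}$-bilinearity of the module-algebra multiplication, $wt=\sum_i f_i\,(w_i t)$, and evaluating at $x$ gives $\sum_i f_i(x)\,(w_i t)(x)=0$ since each $f_i(x)=0$. The same argument applied on the other side yields $s'w(x)=0$, whence $st(x)=s't'(x)$, so $\mu$ is well defined.

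Next I would record that $\mu$ is fiberwise bilinear: for $\lambda,\kappa\in\C$ and representatives, $(\lambda s+s')t(x)=\lambda\,st(x)+s't(x)$ follows directly from the $\C$-bilinearity of the multiplication on $\mc{P}$ together with well-definedness, and similarly in the second argument. Hence $\mu$ assembles into a vector bundle homomorphism $B\otimes B\to B$. Associativity of $\mu$ on each fiber is immediate from associativity of the product on $\mc{P}$: choosing representatives $s,t,u$ with $s(x)=p_x$, $t(x)=q_x$, $u(x)=r_x$, both sides of the associativity identity equal $stu(x)$. The final assertion $st(y)=s(y)t(y)$ for all $y$ is just the statement that under the bundle structure so constructed, the product of sections computed in $\mc{P}$ agrees pointwise with the fiberwise product $\mu$, which holds by the definition of $\mu$ applied at each base point $y$.

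The main obstacle is the well-definedness step, and its crux is that Lemma \ref{lem:sectionzero} is exactly what converts ``vanishing at a point'' into a factorization through functions vanishing at that point; without it one cannot pass the evaluation $x$ through the multiplication. The remaining verifications are routine pointwise transcriptions of the module-algebra axioms. In the involutive case the identical Lemma-based argument shows that $J(p_x):=s^*(x)$ is well defined, giving a fiberwise $*$-structure and hence the $*$-algebra bundle structure, but I would treat this as a straightforward parallel rather than a separate difficulty.
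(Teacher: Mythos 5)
Your proposal is correct and follows essentially the same route as the paper: well-definedness via the decomposition $st-s't'=(s-s')t+s'(t-t')$ together with Lemma \ref{lem:sectionzero} and $\cinf{M}$-bilinearity, with the remaining bundle-map and associativity verifications inherited pointwise from the module-algebra structure. The extra explicit checks of bilinearity and associativity are routine details the paper leaves implicit.
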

\begin{proof}
We need to show that the definition of the fiber product is independent of the choice of sections $s, t$ with $s(x) = p$ and $t(x)=q$. Therefore, let $s', t'$ be two other
sections of the bundle $B$ with $s'(x) = p$ and $t'(x) =q$. Then $s_0 = s' - s$ and $t_0 = t' - t$ are sections for which  $s_0(x)=t_0(x)=0$. According to Lemma
\ref{lem:sectionzero} $s_0$ and $t_0$ can be written as $s_0 = \sum_i f_i s_i$, $t_0 = \sum_i g_i t_i$ where $f_i(x)=g_i(x)=0$ for every $i$. This gives
\begin{eqnarray*}
s't' - st = (s' - s) t' + s(t' - t) = \sum f_i s_i t' + \sum_i g_i s t_i ,
\end{eqnarray*}
which evaluated at $x$ gives zero because of the module structure of $\Gamma(M,B)$. This argument shows that $s' t'(x) = s t(x)$ and the product is well-defined.  Actually,
the map $(s,t) \mapsto st$ is $\cinf{M}$-bilinear so it can be considered as a $\cinf{M}$-linear map from $\Gamma(M,B) \otimes_{\cinf{M}} \Gamma(M,B)$ to $\Gamma(M,B)$. This corresponds to a vector bundle homomorphism $\mu: B \otimes B \rightarrow B$.
If $\mc{P} \cong \Gamma(M,B)$ is unital with unit $1_P$,
then we can fix a unit in the fiber $B_x$ by setting $1_{B_x}=1_{\mc{P}}(x)$. 
\end{proof}

\begin{prp}
For given $p \in B_x$, let $s \in \mc{P}$ be such that $s(x)=p$. Define $p^* := J p := s^*(x)$. This is a well-defined involutive structure on the fiber $B_x$, turning $B$ into an involutive algebra bundle.
\label{prp:dfnstar}
\end{prp}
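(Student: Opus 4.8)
The plan is to mirror the proof of Proposition \ref{prp:dfnproduct}, replacing the bilinear fiber multiplication by the (conjugate-linear, anti-multiplicative) involution on $\mc{P}$, and to take care at the end to identify the correct target bundle for $J$.

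First I would establish well-definedness on each fiber. Suppose $s, s' \in \mc{P}$ both satisfy $s(x) = s'(x) = p$, so that $s_0 := s' - s$ vanishes at $x$. By Lemma \ref{lem:sectionzero} we may write $s_0 = \sum_i f_i s_i$ with $f_i(x) = 0$ and $s_i \in \mc{P}$. Using the involution axiom $(r a)^* = r^* a^*$ of an involutive $\cinf{M}$-module algebra — where $r^*$ is complex conjugation on $R = \cinf{M}$ — we obtain $s_0^* = \sum_i \ol{f_i}\, s_i^*$. Since $\ol{f_i}(x) = \ol{f_i(x)} = 0$, evaluation at $x$ gives $s_0^*(x) = 0$, whence $s'^*(x) = s^*(x)$. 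Thus $p^* := s^*(x)$ is independent of the chosen section, and $J \colon B_x \to B_x$, $p \mapsto p^*$, is well-defined.

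Next I would verify the fiberwise involutive axioms. Writing $p = s(x)$ and $q = t(x)$, the relation $(p^*)^* = p$ follows from $(s^*)^* = s$ in $\mc{P}$; conjugate-linearity $(\lambda p)^* = \ol{\lambda}\, p^*$ for $\lambda \in \C$ follows from $(\lambda s)^* = \ol{\lambda}\, s^*$ specialized to constant functions; and anti-multiplicativity $(pq)^* = q^* p^*$ follows from $(st)^* = t^* s^*$ together with the definition $pq = st(x)$ from Proposition \ref{prp:dfnproduct}, since $(st)^*(x) = (t^* s^*)(x) = q^* p^*$. I would then assemble these fiberwise maps into a bundle morphism $J \colon B \to \ol{B}^{op}$. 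The point is that conjugate-linearity and anti-multiplicativity become \emph{linearity} and \emph{multiplicativity} after equipping the target with the conjugate complex structure and the opposite product: $(fs)^* = \ol{f}\, s^*$ and $(st)^* = t^* s^*$ read as $\cinf{M}$-linearity and homomorphy into $\Gamma(M, \ol{B}^{op})$. Since the involution on $\mc{P} = \Gamma(M,B)$ preserves smooth sections and the induced map is $\cinf{M}$-linear, it corresponds to a smooth vector bundle homomorphism; the identity $J^2 = 1$, read as a map $B \to \ol{\,\ol{B}^{op}}^{op} \cong B$, again comes from $(s^*)^* = s$. This exhibits $B$ as an involutive algebra bundle.

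The computations are all routine and strictly parallel to Proposition \ref{prp:dfnproduct}; the only step demanding genuine care is the bookkeeping of conjugate-linearity versus linearity, namely ensuring that the natural codomain of $J$ is $\ol{B}^{op}$ rather than $B$ itself and that this matches the definition of an involutive algebra bundle. The well-definedness argument is effectively identical to that of the multiplication, with complex conjugation of the coefficient functions $f_i$ contributing nothing since it preserves vanishing at $x$.
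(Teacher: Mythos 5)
Your proof is correct and follows essentially the same route as the paper: well-definedness via Lemma \ref{lem:sectionzero} and the axiom $(fs)^* = \ol{f}\,s^*$, with the remaining fiberwise $*$-algebra axioms deduced from the module $*$-algebra structure. You simply spell out in more detail the verification (and the identification of the codomain as $\ol{B}^{op}$) that the paper dismisses as immediate.
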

\begin{proof}
We will use the same argument as before. Let $s'$ be another such section with $s'(x)=p$. Then with Lemma \ref{lem:sectionzero} $s_0 = s - s'$ can be written as a sum $\sum_i f_i s_i$ where $s_i \in \mc{P}$, $f_i\in C^{\infty}(M)$ and $f_i(x)=0$ for all $i$. This gives
\begin{equation*}
s^*(x) - s'^*(x) = (s - s')^*(x) = \sum_i (f_i s_i)^*(x) = \sum_i f^*_i(x) s^*_i(x) = 0,
\end{equation*}
so that the star structure is well-defined. That this is indeed a star structure on the fiber $B_x$ compatible with the algebra structure of the fiber, follows immediately
from the definition of a module $*$-algebra. 
\end{proof}

The functor $\Gamma: \text{Vect}_M \rightarrow \text{FGP}_{\cinf{M}} \text{-mod}$ can be restricted to a functor $\hat{\Gamma}$ from the category $AlgB(M)$ of algebra bundles to the category of finitely generated projective $\cinf{M}$-algebras $\text{FGP}_{\cinf{M}} \text{-alg-mod}$. A similar statement applies to involutive algebra bundles and involutive module algebra. 
It follows from Propositions \ref{prp:dfnproduct} and \ref{prp:dfnstar} that the restricted functor $\hat{\Gamma}$ is still essentially surjective. As a restriction of a faithful functor, $\hat \Gamma$ is of course also faithful. To show that $\Gamma$ is full, let $B_1, B_2$ be two ($*$-)algebra
bundles and $F: \Gamma(M,B_1) \rightarrow \Gamma(M,B_2)$ be a ($*$-preserving) $\cinf{M}$-algebra-homomorphism. We will prove that 
the bundle homomorphism $\phi$ defined by
\begin{equation}
\phi(e) = F(s)(x), \quad (e  \in E_1),
\label{eq:full}
\end{equation}
where $s \in \Gamma(M,B_1)$ satisfies $s(x) =e$, is a $*$-algebra bundle homomorphism which is mapped to $F$ by $\hat{\Gamma}$. 

Firstly, observe that the map $\phi$ is
well-defined: let $s'$ be another section with $s'(x) = e$. Then $s - s' = \sum_i f_i s_i$, where the $s_i$ are in $\Gamma(M,B_1)$ and where the $f_i$ are smooth functions on $M$ vanishing at $x$. This implies that indeed $F(s - s')(x) = 0$. 

Secondly, $\phi$ is a $*$-algebra bundle homomorphism, since
\begin{eqnarray}
\phi(pq) &=& F(st)(x) = F(s)F(t)(x) = F(s)(x)\cdot F(t)(x) = \phi(p) \phi(q), \nonumber \\
\phi(p^*) &=& F(s^*)(x) = (F(s))^*(x) = (F(s)(x))^* = \phi(p)^*. \nonumber
\end{eqnarray}
where $s,t \in \Gamma(M,E)$ are such that $p=s(x)$, $q=t(x)$. 

Finally, by construction $(\hat{\Gamma}(\phi)s)(x) = \phi(s(x)) = F(s)(x)$, so that $\hat{\Gamma}(\phi) = F$ as required. Hence, $\hat{\Gamma}$ is a full functor.

\begin{rmk}
If $B_1$, $B_2$ are unital ($*$-)algebra bundles and $\phi: B_1 \rightarrow B_2$ is a unital ($*$-)algebra bundle homomorphism, then $\hat{\Gamma}(\phi)$ is a unital
($*$-preserving) $\cinf{M}$-algebra-homomorphism. Conversely, if $F: \Gamma(M,B_1) \rightarrow \Gamma(M,B_2)$ is a unital ($*$-preserving)
$\cinf{M}$-algebra-homomorphism, and $\phi: B_1 \rightarrow B_2$ is defined by (\ref{eq:full}), then
\begin{equation*}
\phi(1_x) = F(1)(x) = 1_{x}.
\end{equation*}
\end{rmk}

We summarize the results in this subsection in the following theorem.
\begin{thm}[Serre--Swan for algebra bundles]
\label{thm:serreswan2}
Let $M$ be a compact manifold. The functor $\hat{\Gamma}$ furnishes an equivalence between the category of (unital) (involutive) algebra bundles over $M$ and the category of (unital) finitely
generated projective (involutive) $\cinf{M}$-algebras. 
\end{thm}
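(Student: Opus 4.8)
The plan is to reduce the statement to the standard categorical criterion that a functor is an equivalence of categories exactly when it is full, faithful, and essentially surjective. Each of these three properties for $\hat{\Gamma}$ has essentially been supplied by the discussion preceding the statement, so the proof amounts to assembling them and checking that the various decorations (unital, involutive, or neither) are respected throughout.

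First I would establish essential surjectivity. Starting from a finitely generated projective (involutive) $\cinf{M}$-algebra $\mc{P}$, the classical Serre--Swan Theorem yields a vector bundle $B$, unique up to isomorphism, together with an isomorphism $\mc{P}\cong\Gamma(M,B)$ of $\cinf{M}$-modules. Propositions \ref{prp:dfnproduct} and \ref{prp:dfnstar} then equip $B$ with a fiberwise product and involution; the key identity $st(y)=s(y)t(y)$ guarantees that under the Serre--Swan isomorphism these pointwise operations reproduce the multiplication and involution of $\mc{P}$. Hence $\hat{\Gamma}(B)\cong\mc{P}$ as (involutive) $\cinf{M}$-algebras, so $\hat{\Gamma}$ is essentially surjective.

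For faithfulness I would simply observe that $\hat{\Gamma}$ is the restriction of the Serre--Swan functor $\Gamma$, which is faithful, so $\hat{\Gamma}$ inherits faithfulness. For fullness, given (involutive) algebra bundles $B_1,B_2$ and a ($*$-preserving) $\cinf{M}$-algebra homomorphism $F\colon\Gamma(M,B_1)\to\Gamma(M,B_2)$, I would define the bundle map $\phi$ by formula \eqref{eq:full}, use Lemma \ref{lem:sectionzero} to check that $\phi$ is well-defined, and use the displayed computations before the statement to confirm that $\phi$ is a $*$-algebra bundle homomorphism with $\hat{\Gamma}(\phi)=F$. The unital variants follow from the remark that $\hat{\Gamma}$ matches unit-preserving homomorphisms with unit-preserving bundle maps, so the equivalence restricts to the unital subcategories.

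Since all the analytic work has already been carried out, there is no serious obstacle; the step requiring the most care is essential surjectivity, specifically verifying that the Serre--Swan isomorphism, which is a priori only an isomorphism of $\cinf{M}$-modules, upgrades to one respecting the algebra (and involutive) structure. This is exactly what the identity $st(y)=s(y)t(y)$ of Proposition \ref{prp:dfnproduct} secures, so the upgrade is automatic once the fiberwise operations are in place.
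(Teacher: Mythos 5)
Your proposal is correct and follows essentially the same route as the paper: the text preceding the theorem establishes essential surjectivity via Propositions \ref{prp:dfnproduct} and \ref{prp:dfnstar}, faithfulness by restriction of the classical Serre--Swan functor, and fullness via the bundle map \eqref{eq:full} checked with Lemma \ref{lem:sectionzero}, with the unital case handled in the subsequent remark. The theorem in the paper is explicitly a summary of exactly these steps, so nothing further is needed.
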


\subsection{Spectral triple obtained from an algebra bundle}
\label{ssct:algebrabundletospectraltriple}
In this subsection we construct a real and even spectral triple whose algebra is isomorphic to $\Gamma(M,B)$. 
Here $B$ is some locally trivial $*$-algebra bundle whose fibers are copies of a fixed (finite-dimensional) $*$-algebra $A$. Furthermore, we require that for each $x$ the fiber $B_x$ is endowed with a faithful tracial state $\tau_x$ so that for all $s \in \Gamma(M,B)$ the function $x \mapsto \tau_x{s(x)}$ is smooth. The corresponding Hilbert--Schmidt inner product on the fiber $B_x$ induced by $\tau_x$ is denoted by $\langle \cdot , \cdot \rangle_{B_x}$. Consequently, the $\cinf{M}$-valued form
\begin{equation*}
( \cdot, \cdot )_B: \Gamma(M,B) \times \Gamma(M,B) \rightarrow \cinf{M}, \quad (s,t)_B(x) = \langle s(x), t(x) \rangle_{B_x},
\end{equation*}
turns $\Gamma(M,B)$ into a pre-Hilbert $\cinf{M}$-module. 

As in the previous sections, we assume that $M$ is a Riemannian spin manifold, on which $S \to M$ is a spinor bundle and $\Dslash = c \circ \nabla^S$ a Dirac operator. Combining the inner product on spinors with the above hermitian structure naturally induces the following inner product on $\Gamma(M,B \otimes S)$:
\begin{equation}
\langle \xi_1, \xi_2  \rangle_{\Gamma(M,B \otimes S)} := \int_M \langle \xi_1(x) , \xi_2(x) \rangle_{B_x \otimes S_x} \quad (\xi_1,\xi_2 \in \Gamma(M,B \otimes S)),
\label{eq:innerproductbundle}
\end{equation}
turning it into a pre-Hilbert space. The completion with respect to the norm induced by this inner product consists of all square-integrable sections of $B \otimes S$, and is denoted by $L^2(M,B \otimes S)$. 
\begin{rmk}
Note that we can identify $\Gamma(M,B \otimes S ) \cong \Gamma(M,B) \otimes_{\cinf{M}} \Gamma(M,S)$ as $\cinf{M}$-modules. In what follows, we will use this isomorphism without further notice. The above inner product (\ref{eq:innerproductbundle}) can be written as
\begin{equation*}
\langle  s_1 \otimes \psi_1, s_2 \otimes \psi_2 \rangle_{\Gamma(M,B) \otimes_{\cinf{M}} \Gamma(M,S)}  = \langle \psi_1 , (s_1,s_2)_B \psi_2 \rangle_S,
\end{equation*}
where $(s_1, s_2)_B \in \cinf{M}$ acts on $\Gamma(M,S)$ by point-wise multiplication. 
\end{rmk}



\begin{thm}
\label{thm:spectraltriple1}
In the above notation, let $\nabla^B$ be a hermitian connection (with respect to the Hilbert--Schmidt inner product) on the algebra bundle $B$ and let $D_B = c \circ(\nabla^B \otimes 1 + 1 \otimes \nabla^S) $ be the twisted Dirac operator on $B \otimes S$. Then
$$
(\Gamma(M,B),L^2(M,B \otimes S), D_B) 
$$
is a spectral triple with summability equal to the dimension of $M$. 
\end{thm}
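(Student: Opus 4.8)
The plan is to verify the three defining properties of a spectral triple in turn: the faithful representation of $\Gamma(M,B)$ on $L^2(M,B\otimes S)$, the compact resolvent of $D_B$, and the boundedness of the commutators $[D_B,s]$ for $s\in\Gamma(M,B)$. First I would make the representation precise. The algebra $\Gamma(M,B)$ acts on $\Gamma(M,B\otimes S)\cong\Gamma(M,B)\otimes_{\cinf{M}}\Gamma(M,S)$ by fiberwise left multiplication on the $B$-factor, using the fiber product constructed in Proposition \ref{prp:dfnproduct}; faithfulness is immediate because left multiplication by a nonzero section is nonzero in some fiber, and the trace is faithful so the action is nonzero on $L^2$. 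I would also note the operator $D_B=c\circ(\nabla^B\otimes 1+1\otimes\nabla^S)$ is a well-defined first-order differential operator on the hermitian bundle $B\otimes S$, with the Clifford action $c$ contracting the one-form part.

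Next I would establish self-adjointness and the compact resolvent. The key observation is that $D_B$ is a generalized (twisted) Dirac operator: $B\otimes S$ is a Clifford module, the connection $\nabla^B\otimes 1+1\otimes\nabla^S$ is a Clifford connection because $\nabla^S$ is the spin connection and $\nabla^B$ commutes with the Clifford action (which only touches the $S$-factor), and $\nabla^B$ is hermitian by hypothesis while $\nabla^S$ is the Levi-Civita spin connection. Hence $D_B$ is a formally self-adjoint elliptic first-order operator on a closed Riemannian manifold, and I would invoke the standard analytic theory: it is essentially self-adjoint on $\Gamma(M,B\otimes S)$, and by elliptic regularity together with the Rellich lemma its resolvent $(D_B-\lambda)^{-1}$ is compact for $\lambda\notin\mathbb{R}$. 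The summability equal to $\dim M$ follows from Weyl's law for the eigenvalue asymptotics of a first-order elliptic operator, exactly as for the untwisted Dirac operator $\Dslash$.

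For the bounded-commutator condition I would compute $[D_B,s]$ for $s\in\Gamma(M,B)$ acting by left multiplication. Since $c$ and the $S$-connection act trivially on the $B$-factor, the zeroth-order parts of $D_B$ cancel against $s$ and only the derivative of $s$ survives, giving
\begin{equation*}
[D_B,s]=c\bigl((\nabla^B s)\otimes 1\bigr),
\end{equation*}
which is Clifford multiplication by the $B$-valued one-form $\nabla^B s$ and hence a zeroth-order operator, manifestly bounded on $L^2(M,B\otimes S)$ since $M$ is compact and $s$ is smooth. The main obstacle is the analytic heart of step two — confirming essential self-adjointness and the compact resolvent — but this reduces entirely to the classical theory of generalized Dirac operators once I have verified that $\nabla^B\otimes 1+1\otimes\nabla^S$ is a genuine Clifford connection making $D_B$ a symmetric elliptic operator; the algebraic steps (faithfulness and the commutator formula) are routine given Proposition \ref{prp:dfnproduct}.
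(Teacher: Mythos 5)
Your proposal follows essentially the same route as the paper: boundedness of the multiplication operators, compact resolvent and summability from ellipticity of the twisted Dirac operator on the compact manifold $M$ (you spell out essential self-adjointness, Rellich and Weyl's law, which the paper leaves implicit), and boundedness of $[D_B,s]$ because it is a zeroth-order operator. One small correction: your formula $[D_B,s]=c\bigl((\nabla^B s)\otimes 1\bigr)$, and the claim that ``the zeroth-order parts of $D_B$ cancel against $s$,'' presuppose that $\nabla^B$ satisfies the Leibniz rule with respect to the fiber product; that is only assumed later, for the $*$-algebra connections of Theorem \ref{thm:spectraltriple2}. Here $\nabla^B$ is merely a hermitian connection, so locally $\nabla^B_\mu=\partial_\mu+\omega^B_\mu$ with $\omega^B_\mu$ an arbitrary anti-hermitian endomorphism of the fiber, and the correct local expression is $[D_B,s](t\otimes\psi)=\bigl(\partial_\mu s+[\omega^B_\mu,s]\bigr)t\otimes\gamma^\mu\psi$, as in the paper. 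This does not affect your conclusion, since the commutator is in any case Clifford multiplication by a smooth endomorphism-valued one-form and hence bounded.
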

\begin{proof}
First, it is obvious that fiber-wise multiplication of $a \in \Gamma(M,B)$ on $\Gamma(M,B \otimes S)$ extends to a bounded operator on $L^2(M,B \otimes S)$ since
$$
\|as \otimes \psi \|^2 = \int_M \langle \psi(x), \langle a(x)s(x), a(x)s(x) \rangle_{B_x}  \psi(x) \rangle_{S_x} dx 
		\leq \sup_{x \in M} \{|a(x)|^2\} 
\| s \otimes \psi \|^2.
$$
Compactness of the resolvent and summability is clear from ellipticity of the twisted Dirac operator $D_B$, $M$ being a compact manifold. Moreover, the commutator $[D_B,a]$ is bounded for $a \in \Gamma(M,B)$ since $D_B$ is a first-order differential operator. More precisely, in local coordinates one computes
\begin{equation*}
[D_B,a] (s \otimes \psi) = \left(\partial_{\mu} a + [\omega^B_{\mu},a]\right) s \otimes \gamma^{\mu} \psi.
\end{equation*}
where $\nabla^B_\mu = \partial_\mu + \omega^B_\mu$, locally. This operator is clearly bounded on $L^2(M,B \otimes S)$, provided $a$ is differentiable and $\omega^B_\mu$ is a smooth connection one-form.
\end{proof}

Next, we would like to extend our construction to arrive at a real spectral triple. For this, we introduce an anti-linear operator on $L^2(M,B \otimes S)$ of the form
$$
J ( s \otimes \psi ) = s^* \otimes J_M \psi
$$
with $J_M$ charge conjugation on $M$ (\textit{cf}. Example \ref{ex:canonical}). For this operator to be a real structure on our spectral triple $(\Gamma(B), L^2(B \otimes S), D_B)$, we need some extra conditions on the connection $\nabla_B$ on $B$. 
\begin{dfn}
Let $B$ be a $*$-algebra bundle over a manifold $M$. A $*$\emph{-algebra connection} $\nabla$ on $B$ is a connection on $B$ that satisfies
\begin{eqnarray*}
\nabla(st) = s\nabla t + (\nabla s)t, \qquad
(\nabla s)^* = \nabla s^*; \qquad (s,t \in \Gamma(M,B)).
\end{eqnarray*}
If $B$ is a hermitian $*$-algebra bundle and $\nabla$ is also a hermitian connection, then $\nabla$ is called a \emph{hermitian} $*$-algebra
connection.
\end{dfn}
Before we proceed we need to know whether a hermitian $*$-algebra connection exists on any given locally trivial $*$-algebra bundle. A partition of unity argument easily shows how to construct hermitian $*$-algebra connections on arbitrary $*$-algebra bundles.
\begin{lem}
Every locally trivial hermitian $*$-algebra bundle $B$ defined over a paracompact space $M$ admits a hermitian $*$-algebra connection.
\label{lma:existenceconnection}
\end{lem}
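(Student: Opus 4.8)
The plan is to prove existence locally, where $B$ trivializes and an obvious connection is available, and then to patch the local connections together with a partition of unity, using paracompactness of $M$. The guiding observation is that the space of connections on $B$ is an affine space modelled on $\Omega^1(M,\operatorname{End} B)$, and that each of the three defining conditions of a hermitian $*$-algebra connection --- the Leibniz rule, compatibility with the involution, and metric compatibility --- is an \emph{affine} constraint on $\nabla$. Consequently any combination $\sum_i \rho_i \nabla_i$ of connections satisfying these conditions, with real coefficients $\rho_i$ summing to $1$, will again satisfy them, and this is exactly what a partition of unity provides.

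For the local step I would choose an open cover $\{U_i\}$ of $M$ over which $B$ trivializes as a hermitian $*$-algebra bundle, so that $B|_{U_i} \cong U_i \times A$ compatibly with the fiberwise multiplication, the involution, and the Hilbert--Schmidt inner product. On each $U_i$ the flat connection $\nabla_i = d$, differentiating the $A$-valued components in this trivialization, is then a hermitian $*$-algebra connection: the product and involution are constant in the trivialization, giving $d(st) = s\,dt + (ds)t$ and $(ds)^* = d(s^*)$, while the inner product is the fixed Hilbert--Schmidt one, so that $d(s,t)_B = (ds,t)_B + (s,dt)_B$. For the case of interest, fiber $A = M_N(\C)$, every $*$-automorphism is inner by Skolem--Noether and hence preserves the unique tracial state; a $*$-algebra trivialization is therefore automatically isometric, so $d$ is already hermitian.

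For the global step, let $\{\rho_i\}$ be a smooth partition of unity subordinate to $\{U_i\}$, which exists since $M$ is paracompact, with each $\rho_i$ real-valued and $\sum_i \rho_i = 1$, and set $\nabla = \sum_i \rho_i \nabla_i$. Since the difference of two connections is $\cinf{M}$-linear, this convex combination is again a connection. One then checks the three properties, each following at once from $\sum_i \rho_i = 1$ together with the reality of the $\rho_i$:
\begin{align*}
\nabla(st) &= \sum_i \rho_i\bigl(s\,\nabla_i t + (\nabla_i s)\,t\bigr) = s\,\nabla t + (\nabla s)\,t,\\
(\nabla s)^* &= \sum_i \rho_i\,(\nabla_i s)^* = \sum_i \rho_i\,\nabla_i s^* = \nabla s^*,\\
(\nabla s,t)_B + (s,\nabla t)_B &= \sum_i \rho_i\bigl((\nabla_i s,t)_B + (s,\nabla_i t)_B\bigr) = \sum_i \rho_i\, d(s,t)_B = d(s,t)_B.
\end{align*}
Thus $\nabla$ is a hermitian $*$-algebra connection on $B$.

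The one step requiring genuine care --- and the main obstacle --- is the local existence: one must know that $B$ admits local trivializations respecting the multiplication, the involution, \emph{and} the metric simultaneously. For the $M_N(\C)$-fiber this is automatic, as noted above, because the structure group $PU(N)$ acts by inner, trace-preserving automorphisms. For a general fiber one should either build isometry into the very notion of a hermitian $*$-algebra bundle, or first take a $*$-algebra trivialization and correct the flat connection $d$ by a one-form valued in skew-hermitian $*$-derivations of $B$, chosen to restore metric compatibility without disturbing the Leibniz and involution properties. Once local existence is secured, the partition-of-unity argument above is entirely routine.
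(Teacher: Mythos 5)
Your proof is correct and follows essentially the same route as the paper: take a locally finite trivializing cover, use the flat connection $d$ on each patch as a local hermitian $*$-algebra connection, and glue with a real-valued partition of unity, noting that all three defining conditions are affine and hence preserved under convex combinations. Your closing caveat --- that the local trivializations must respect the multiplication, involution, \emph{and} inner product simultaneously --- is a point the paper's proof takes for granted, and your observation that this is automatic for fiber $M_N(\mathbb{C})$ (the case of interest, via Remark \ref{rmk:derivation1form}) is a worthwhile clarification rather than a deviation.
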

\begin{proof}
Let $\{U_i\}$ be a locally finite open covering of $M$ such that $B$ is trivialized over $U_i$ for each $i$. Then on each $U_i$ there exists a hermitian $*$-algebra connection $\nabla_i$, for instance the trivial connection $d$ on $U_i$. Now, let $\{f_i\}$ be a partition of unity subordinate to the open covering $\{U_i\}$ (all $f_i$ are real-valued). Then the linear map $\nabla$ defined by
\begin{equation*}
(\nabla s)(x) = \sum_i f_i(x) (\nabla_i s)(x), \quad (x \in M),
\end{equation*}
is a hermitian $*$-algebra connection on $\Gamma(M,B)$.
\end{proof}

\begin{rmk}
\label{rmk:derivation1form}
The fact that locally, on some trivializing neighborhood, the exterior derivative $d$ is a hermitian $*$-algebra connection shows that on such a local patch every hermitian $*$-algebra connection is of the form
\begin{equation*}
d + \omega,
\end{equation*}
where $\omega$ is a real connection 1-form with values in the real Lie algebra of $*$-derivations of the fiber that are anti-hermitian with respect to the inner product on the fiber. For instance, when the fiber is the $*$-algebra $M_N(\mathbb{C})$ endowed with the Hilbert--Schmidt inner product, this Lie algebra is precisely $\text{ad}(\mathfrak{u}(N)) \cong \mathfrak{su}(N)$.
\end{rmk}

\begin{thm}
\label{thm:spectraltriple2}
Suppose in addition to the conditions of Theorem \ref{thm:spectraltriple1} that $\nabla^B$ is hermitian $*$-algebra connection and set $\gamma_B = 1 \otimes \gamma_5$ as a self-adjoint operator on $L^2(M,B \otimes S)$. Then
$\spectraltripleJeven$ is a real and even spectral triple with $KO$-dimension equal to the dimension of $M$.
\end{thm}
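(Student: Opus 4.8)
The plan is to start from Theorem~\ref{thm:spectraltriple1}, which already furnishes the spectral triple $(\Gamma(M,B),L^2(M,B\otimes S),D_B)$ with the correct summability. What remains is therefore to verify that $J$ is a genuine real structure (anti-unitary, satisfying the three sign relations $J^2=\epsilon$, $JD_B=\epsilon'D_BJ$, $J\gamma_B=\epsilon''\gamma_BJ$, together with the order-zero and order-one conditions of Definition~\ref{dfn:realstructure}), and that $\gamma_B=1\otimes\gamma_5$ is an even grading compatible with $D_B$, $J$ and the algebra. Once all the signs are pinned down, the KO-dimension can simply be read off.

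Several of these checks I expect to be routine and can be dispatched quickly. Anti-unitarity of $J=(\cdot)^*\otimes J_M$ follows because $J_M$ is anti-unitary and, since each $\tau_x$ is a faithful \emph{tracial} state, the fibre involution $p\mapsto p^*$ is anti-isometric for the Hilbert--Schmidt inner product (one has $\langle p^*,q^*\rangle_{B_x}=\overline{\langle p,q\rangle_{B_x}}$ by the trace property). Because $(p^*)^*=p$ and $J_M^2=\epsilon$ we get $J^2=\epsilon$, and from $J_M\gamma_5=\epsilon''\gamma_5J_M$ we get $J\gamma_B=\epsilon''\gamma_BJ$; thus $\epsilon$ and $\epsilon''$ are inherited verbatim from the canonical triple of Example~\ref{ex:canonical}. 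For the grading, $\gamma_B^2=1$ and $\gamma_B^*=\gamma_B$ are immediate, $\gamma_B$ commutes with each $a\in\Gamma(M,B)$ since these act on different tensor factors, and $\gamma_B$ anticommutes with $D_B$ because $\gamma_5$ anticommutes with every Clifford multiplication $\gamma^\mu$ occurring in $D_B$. Finally, a short computation shows that $b^0=Jb^*J^{-1}$ is precisely \emph{right} multiplication by $b$ in the fibre, $b^0(s\otimes\psi)=sb\otimes\psi$; since left and right multiplication in an associative algebra commute, $[a,b^0]=0$, and because the commutator $[D_B,a]$ computed in Theorem~\ref{thm:spectraltriple1} is again a fibrewise-left-multiplication operator tensored with $\gamma^\mu$, it too commutes with the right multiplication $b^0$, giving the order-one condition $[[D_B,a],b^0]=0$.

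The main obstacle is the relation $JD_B=\epsilon'D_BJ$, and this is exactly where the hypothesis that $\nabla^B$ be a hermitian $*$-algebra connection is indispensable. Writing $D_B$ locally as $i\gamma^\mu(\nabla^B_\mu+\nabla^S_\mu)$ and applying the anti-linear $J$, the scalar $i$ is conjugated to $-i$, the fibre involution hits $\nabla^B_\mu s$, and $J_M$ hits $\gamma^\mu$ and $\nabla^S_\mu\psi$. I would then invoke three facts: (i) the $*$-algebra condition $(\nabla^B_\mu s)^*=\nabla^B_\mu(s^*)$, so the involution commutes through $\nabla^B$; (ii) $J_M$ is parallel for the spin connection, $J_M\nabla^S_\mu=\nabla^S_\mu J_M$; and (iii) the even-dimensional relation $J_M\gamma^\mu J_M^*=-\gamma^\mu$. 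Fact (i) guarantees that the twisting term $i\gamma^\mu\nabla^B_\mu$ passes through $J$ with the \emph{same} sign as the spinor term, while (ii)--(iii) reproduce on the spinor factor the canonical identity $J_M\Dslash=\epsilon'\Dslash J_M$; the anti-linearity of $J$ conjugates the factor $i$, supplying the extra minus sign that promotes the Clifford sign of (iii) to the canonical value $\epsilon'$. Both terms then reassemble into $\epsilon'D_BJ$. It is worth stressing that without the $*$-algebra property the $\nabla^B$-term would fail to reassemble into $D_B$, so this hypothesis is not merely convenient but genuinely necessary for $J$ to intertwine $D_B$ with the correct sign.

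With all three signs $(\epsilon,\epsilon',\epsilon'')$ shown to coincide with those of the canonical real and even triple $(\cinf{M},L^2(M,S),\Dslash,J_M,\gamma_5)$ of Example~\ref{ex:canonical}, whose KO-dimension equals $\dim M$, it follows that $\spectraltripleJeven$ is a real and even spectral triple of the same KO-dimension $\dim M$, completing the proof.
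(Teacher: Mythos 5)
Your proposal is correct and follows essentially the same route as the paper: anti-unitarity of $J$ via the tracial property of the fibre inner product, $JD_B=\epsilon'D_BJ$ via the local computation combining $(\nabla^B s)^*=\nabla^B(s^*)$ with the canonical relations between $J_M$, $\gamma^\mu$ and $\nabla^S$, and the order-zero/order-one conditions via the identification of $b^0$ with fibrewise right multiplication. Your shortcut for the order-one condition (observing that $[D_B,a]$ is left multiplication by $\nabla^B a$ tensored with Clifford action, hence commutes with right multiplication) is a mild streamlining of the paper's explicit Leibniz-rule expansion, but rests on the same hypothesis and yields the same conclusion.
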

\begin{proof}
First of all, we check that $J$ is anti-unitary:
\begin{align*}
\langle J( s\otimes \psi), J (t\otimes \eta) \rangle &= \langle J_M \psi, (s^*,t^*) J_M \eta \rangle = \langle J_M \psi, J_M \ol{(s^*,t^*)}\eta
\rangle \nonumber \\
				&= \langle \ol{(s^*,t^*)} \eta, \psi \rangle = \langle (s, t) \eta, \psi \rangle = \langle t \otimes \eta, s \otimes \psi \rangle, \nonumber
\end{align*}
where we have in the second step that $J_M f = \bar{f} J_M$ for every $f \in \cinf{M}$, in the third step that $J_M$ is anti-unitary and in the fourth step that $(s,t) =
(t^*,s^*)$ (by definition of the hermitian structure as a fiber-wise trace). Moreover, since $J_M^2=-1$ it follows that $J^2=-1$. 

We next establish $DJ =JD$ by a local calculation:
\begin{eqnarray}
(JD - DJ)(s \otimes \psi) &=&J (\nabla^B_{\mu}s \otimes i\gamma^{\mu}\psi + s \otimes \Dslash \psi) - D_B(s^* \otimes J_M \psi)  \nonumber \\
			&=& (\nabla^B_{\mu} s)^* \otimes (-i)J_M \gamma^{\mu} \psi + s^* \otimes J_M \Dslash \psi - \nabla^B_{\mu} s^* \otimes i\gamma^{\mu} J_M \psi
			-s^* \otimes \Dslash J_M \psi \nonumber \\
			&=& -i\left( (\nabla^B_{\mu} s)^* - \nabla^B_{\mu} s^*  \right) \otimes J_M \gamma^{\mu} \psi = 0 \nonumber,
\end{eqnarray}
since in four dimensions $\{J_M, \gamma^{\mu}\}=[\Dslash, J_M]=0$, and the last step is established by the condition of a $*$-algebra connection, {\it i.e.} $(\nabla s)^* = \nabla s^*$ for all $s \in
\Gamma(M,B)$.

The commutant property follows easily:
\begin{eqnarray}
[a,b^0](s \otimes \psi) &=& aJb^*J^*(s \otimes \psi) - Jb^*J^*a(s \otimes \psi) = aJ(b^*s^* \otimes J_M^* \psi) - Jb^* (s^*a^* \otimes J_M^* \psi) \nonumber \\
&=& asb \otimes \psi - asb \otimes \psi = 0, \nonumber
\end{eqnarray}
where $a,b \in \Gamma(M,B)$ and $s \otimes \psi \in \Gamma(M,B) \otimes_{\cinf{M}} \Gamma(M,S)$. Since $[a,b^0]=0$ on $\Gamma(M,B) \otimes_{\cinf{M}} \Gamma(M,S)\cong
\Gamma(M,B \otimes S)$, it is zero on the entire Hilbert space $L^2(M,B\otimes S)$. It remains to check the order one condition for the Dirac operator. First note that
\begin{equation*}
[[D,a],b^0](s \otimes \psi) = i c([[\nabla, a], b^0]( s\otimes \psi)) \quad (a,b,s \in \Gamma(M,B)).
\end{equation*}
This is zero because $[[\nabla, a], b^0]( s\otimes \psi)$ is zero:
\begin{eqnarray}
([\nabla,a]sb) \otimes \psi - Jb^*J^* ([\nabla, a] s \otimes \psi)
&=& \nabla(asb) \otimes  \psi - a \nabla(sb) \otimes \psi - Jb^* J^* \nabla(as) \otimes \psi  + Jb^*J^* a (\nabla s) \otimes \psi \nonumber \\
&=& \nabla(asb) \otimes \psi - a \nabla(sb) \otimes \psi - \nabla(as)b \otimes \psi + a (\nabla s) b \otimes \psi \nonumber \\
&=& \left( (\nabla a) sb + a (\nabla s)b + as (\nabla b) - a (\nabla s)b \right.\nonumber \\
&& \quad - \left. as (\nabla b) - (\nabla a)sb - a(\nabla s)b + a (\nabla s)b \right) \otimes \psi, \nonumber \\
&=& 0 \nonumber
\end{eqnarray}
using the defining property for $\nabla^B$ to be a $*$-algebra connection.
Thus, $J$ fulfils all of the necessary conditions of a real structure on $(\Gamma(M,B), L^2(M,B \otimes S), D_B)$. The conditions on $\gamma_B$ to be a grading operator for this spectral triple are easily checked.
\end{proof}

In the next section we show that the triple $\spectraltripleJeven$ gives a non-trivial Yang--Mills theory over the manifold $M$. The Serre--Swan Theorem \ref{thm:serreswan2} plays an essential role in the proof. First, we explore the form of this spectral triple in the context of Kasparov's KK-theory.

\subsection{Relation with the unbounded Kasparov internal product}
\label{sect:KK}
\label{ssct:kkproduct}
In this section we establish that the spectral triple of Theorem \ref{thm:spectraltriple1} is an unbounded Kasparov product of two unbounded KK-cycles \cite{KasparovKK, BaajJulg}. Let us briefly recall some elementary notions from (unbounded) KK-theory. Denote by $\mc{B}(E)$ the bounded endomorphisms of a right Hilbert $B$-module $E$ and by $\mc{K}(E)$ the compact endomorphisms. 

\begin{dfn}
Let $A$ and $B$ be $\mathbb{Z}_2$-graded $C^*$-algebras. A Kasparov $A$-$B$-module consists of a triple $(E, \phi, F)$ where $E$ is a countably generated $\mathbb{Z}_2$-graded Hilbert-$B$-module, $\phi$ is a graded $*$-homomorphism $A \rightarrow \mc{B}(E)$ and $F$ is a bounded operator of degree $1$, such that $[F, \phi(a)]$, $(F^2 -1)\phi(a)$, and $(F - F^*)\phi(a)$ are in $\mc{K}(E)$.
\end{dfn}
There are the natural notions of unitary and homotopy equivalence and under the direct sum the set of equivalence classes of Kasparov $A-B$-modules forms an abelian group which is denoted by $KK(A,B)$ \cite{Kasparov}. 
One of the key properties of $KK$-theory is the existence of the internal Kasparov product.
\begin{dfn}
\label{dfn:boundedkasparovproduct}
Let $E_1$ be an $A$-$B$-module and $E_2$ a $B$-$C$-module, and define an $A$-$C$-module by $E:= E_1 \otimes_B E_2$. A Kasparov module $(E,\phi,F)$ is called a \emph{Kasparov product} for $(E_1, \phi_1, F_1)$ and $(E_2, \phi_2, F_2)$ if
\begin{itemize}
\item $(E, \phi_1 \otimes \text{Id}, F) \in KK(A,C)$;
\item for every $x \in E_1$ of homogeneous degree $\#x$, the operator $T_x: E_2 \rightarrow E$ defined by $T_x(e) = x\otimes e$ satisfies
\begin{eqnarray}
T_x \circ F_2 - (-1)^{\#x} F \circ T_x \in \mc{K}(E_2, E), \nonumber \\
F_2 \circ T_x^* - (-1)^{\#x} T^*_x \circ F \in \mc{K}(E,E_2);\nonumber
\end{eqnarray}
\item for all $a \in A$ the graded commutator $\phi(a)[F_1 \otimes \text{Id}, F]\phi(a^*) \geq 0 \text{ mod }\mc{K}(E)$.
\end{itemize}
\end{dfn}
If $A$ is separable and $B$ is $\sigma$-unital, then there is a Kasparov-product for $(E_1, \phi_1, F_1)$ and $(E_2, \phi_2, F_2)$ and any of these products are homotopic.\footnote{Actually, they are even operator homotopic (\textit{cf}. \cite{Kasparov} or \cite{Blackadar}).} Therefore, the internal Kasparov product defines a bilinear map $\otimes_B: KK(A,B) \times KK(B,C) \rightarrow KK(A,C)$.

The Kasparov internal product of Definition \ref{dfn:boundedkasparovproduct} can be captured in terms of unbounded Kasparov-modules \cite{BaajJulg}. 

\begin{dfn}[\cite{BaajJulg}, \cite{KucerovskyUnboundedKKmodules}]
Let $A$ and $B$ be graded $C^*$-algebras. An \emph{unbounded Kasparov module} is a triple $(E, \phi, D)$ where $E$ is a graded Hilbert-$B$-module, $\phi: A \rightarrow \mc{B}(E)$ a graded $*$-homomorphism, and $D$ a self-adjoint regular operator in $E$, homogeneous of degree $1$ such that
$(1 + D^2)^{-1} \phi(a)$ extends to an element of $\mc{K}(E)$ for all $a \in A$, and the set of all $a \in A$ such that $[D,\phi(a)]$ extends to an element in $\mc{B}(E)$ is dense in $A$.
\end{dfn}
The set of all unbounded Kasparov modules is denoted by $\Psi(A,B)$. 
\begin{example}
The canonical spectral triple $(\cinf{M}, L^2(M,S), \Dslash, \gamma^5)$ is an element in $\Psi(\cinf{M},\mathbb{C})$. Another example is given as follows: let $B$ be a locally trivial $*$-algebra bundle with a smoothly-varying faithful tracial state on the fibers. Then $(\Gamma(M,B), L^2(M,B), 0)$ is an element of $\Psi(\Gamma(M,B),\cinf{M})$, and even in $KK(\Gamma(M,B),\cinf{M})$. Note that the algebras $\Gamma(M,B)$ and $\cinf{M}$ are trivially graded.
\end{example}

\begin{prp}[\cite{BaajJulg}]
If $(E,\phi,D) \in \Psi(A,B)$ then $(E, \phi, F) \in \mathbb{E}(A,B)$ where $F = D(1+D^2)^{-1}$. If $A$ is separable, the map $(E,\phi,D) \mapsto [(E, \phi, D(1+D^2)^{-1})]$ is a surjective map $\Psi(A,B) \rightarrow KK(A,B)$.
\end{prp}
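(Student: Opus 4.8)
The plan is to run the classical bounded-transform argument of Baaj--Julg. Since $D$ is self-adjoint and regular, the bounded continuous function $\chi(x)=x(1+x^2)^{-1/2}$ may be applied by functional calculus to give $F:=\chi(D)=D(1+D^2)^{-1/2}$, a self-adjoint operator of degree $1$ with $\|F\|\le 1$ (this is the normalization for which $F^2-1=-(1+D^2)^{-1}$, as the Kasparov-module conditions require). The two easy conditions are then immediate: $(F-F^*)\phi(a)=0$ because $F$ is self-adjoint, and $(F^2-1)\phi(a)=-(1+D^2)^{-1}\phi(a)\in\mc{K}(E)$ by the defining hypothesis on an unbounded module. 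It remains only to treat the commutator $[F,\phi(a)]$, and it suffices to do so for $a$ in a dense $\ast$-subalgebra $A_0\subseteq A$ on which $[D,\phi(a)]$ is bounded: since $\mc{K}(E)$ is norm-closed and $a\mapsto[F,\phi(a)]$ is norm-continuous, the conclusion then extends to all of $A$.

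The first preparatory step is to upgrade the hypothesis $(1+D^2)^{-1}\phi(a)\in\mc{K}(E)$ to the statement that $(1+D^2)^{-1/2}\phi(a)$, and hence (taking adjoints over the $\ast$-algebra $A_0$) also $\phi(a)(1+D^2)^{-1/2}$, lie in $\mc{K}(E)$. For this I would use the integral representation
\[
(1+D^2)^{-1/2}=\frac{1}{\pi}\int_0^{\infty}\lambda^{-1/2}R_\lambda\,d\lambda,\qquad R_\lambda:=(1+\lambda+D^2)^{-1},
\]
together with $R_\lambda\phi(a)=\big[(1+D^2)^{-1}\phi(a)\big]\,(1+D^2)R_\lambda$, which exhibits $R_\lambda\phi(a)$ as compact times bounded; the scalar estimate $\|R_\lambda\|\le(1+\lambda)^{-1}$ makes the $\lambda$-integral norm-convergent, so the integral is a norm limit of compacts. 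Since $(1+D^2)^{1/2}DR_\lambda$ is uniformly bounded in $\lambda$, the same argument gives that $\phi(a)DR_\lambda$ and $DR_\lambda\phi(a)$ are compact.

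The heart of the argument is an integral formula for the commutator. Writing $b:=[D,\phi(a)]$ and using $[R_\lambda,\phi(a)]=-R_\lambda[D^2,\phi(a)]R_\lambda$ with $[D^2,\phi(a)]=Db+bD$, a short manipulation (absorbing the factor $1-D^2R_\lambda=(1+\lambda)R_\lambda$) yields
\[
[F,\phi(a)]=\frac{1}{\pi}\int_0^{\infty}\lambda^{-1/2}\Big((1+\lambda)R_\lambda\,b\,R_\lambda-(DR_\lambda)\,b\,(DR_\lambda)\Big)\,d\lambda .
\]
Each integrand is compact: expanding $b=D\phi(a)-\phi(a)D$ and regrouping, every resulting term contains a factor of the compact operators $\phi(a)R_\lambda$, $R_\lambda\phi(a)$, $\phi(a)DR_\lambda$ or $DR_\lambda\phi(a)$ from the previous step, multiplied by operators ($R_\lambda$, $DR_\lambda$, $D^2R_\lambda$) that are uniformly bounded in $\lambda$. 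The estimates $\|(1+\lambda)R_\lambda bR_\lambda\|\le(1+\lambda)^{-1}\|b\|$ and $\|(DR_\lambda)b(DR_\lambda)\|\le\tfrac14(1+\lambda)^{-1}\|b\|$ make the integrand $O(\lambda^{-1/2}(1+\lambda)^{-1})$, integrable at both ends, so the integral converges in norm and $[F,\phi(a)]\in\mc{K}(E)$. This establishes $(E,\phi,F)\in\mathbb{E}(A,B)$; I expect the bookkeeping in this commutator estimate --- ensuring a compact factor survives in every term while keeping the $\lambda$-integral convergent --- to be the main obstacle of the first half.

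For surjectivity when $A$ is separable, I would produce an unbounded representative of an arbitrary class $[(E,\phi,F_0)]\in KK(A,B)$. After the usual normalizations (adding a degenerate module and a compact perturbation) one may assume $F_0=F_0^*$ and $F_0^2\le 1$. Using separability, choose a quasicentral approximate unit $(u_n)$ for $\mc{K}(E)$ relative to the separable $C^*$-algebra generated by $\phi(A)\cup\{F_0\}$, together with scalars $\mu_n\uparrow\infty$, and set $D=\sum_n\mu_n\,\Delta_n$, where the $\Delta_n$ are built from the increments $u_{n+1}-u_n$ and from $F_0$ so that the series telescopes to recover the sign of $F_0$. Quasicentrality forces the commutators $[\mu_n\Delta_n,\phi(a)]$ to be summable, giving boundedness of $[D,\phi(a)]$ on a dense set; the growth $\mu_n\to\infty$ combined with compactness of $(u_{n+1}-u_n)\phi(a)$ yields $(1+D^2)^{-1}\phi(a)\in\mc{K}(E)$; and one checks that $D$ is self-adjoint and regular with bounded transform operator-homotopic to $F_0$. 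The delicate part --- and, in my view, the principal obstacle of the whole proposition --- is the choice of the growth rate $\mu_n$: fast enough that $D$ is genuinely unbounded and its bounded transform reproduces $F_0$, yet slow enough, relative to how quickly $(u_n)$ becomes central, that all commutators remain bounded and the resolvent condition holds.
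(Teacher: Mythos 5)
The paper offers no proof of this proposition: it is quoted verbatim from Baaj--Julg and used as a black box, so there is no internal argument to compare you against. (Note also that the statement as printed has $F=D(1+D^2)^{-1}$, which should be the bounded transform $D(1+D^2)^{-1/2}$; you correctly work with the latter normalization.) Judged on its own terms, your first half is the standard Baaj--Julg argument and is essentially correct: the reduction to a dense $*$-subalgebra, the integral representation of $(1+D^2)^{-1/2}$, the commutator identity $[DR_\lambda,\phi(a)]=(1+\lambda)R_\lambda b R_\lambda - DR_\lambda\, b\, DR_\lambda$ with $b=[D,\phi(a)]$, and the norm estimates of order $\lambda^{-1/2}(1+\lambda)^{-1}$ that make the integral norm-convergent are all right. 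One slip: the identity $R_\lambda\phi(a)=\bigl[(1+D^2)^{-1}\phi(a)\bigr](1+D^2)R_\lambda$ is false as written (it would require $\phi(a)$ to commute with $D$); what you want is $R_\lambda\phi(a)=\bigl[(1+D^2)R_\lambda\bigr]\bigl[(1+D^2)^{-1}\phi(a)\bigr]$, bounded times compact, which serves the same purpose and is the factorization your subsequent steps actually use.

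The surjectivity half is a strategy rather than a proof. You name the right tool (a quasicentral approximate unit for $\mathcal{K}(E)$ relative to the separable $C^*$-algebra generated by $\phi(A)$ and $F_0$) and correctly locate the difficulty in the choice of the growth rates $\mu_n$, but the operator $D=\sum_n \mu_n\Delta_n$ is never actually defined --- ``built from the increments so that the series telescopes'' is not a construction --- and the verifications that $D$ is self-adjoint and regular, that $(1+D^2)^{-1}\phi(a)$ is compact, and that the bounded transform of $D$ is operator homotopic to $F_0$ constitute essentially the whole content of this direction. Since the paper itself delegates everything to the reference, your sketch is a fair account of where the proof lives, but the second half as written could not substitute for it.
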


Thus, classes in $KK(A,B)$ can be represented by unbounded cycles in $\Psi(A,B)$. The following theorem is due to Kucerovsky \cite{KucerovskyUnboundedKKmodules} and introduces a Kasparov product for unbounded $KK$-modules. This was further worked out by Mesland \cite{Mesland}.
\begin{thm}[Kucerovsky]
\label{thm:unboundedkasparovproduct}
Suppose that $(E_1 \otimes_B E_2, \phi_1 \otimes_B \text{Id}, D) \in \Psi(A,C)$, $(E_1, \phi_1, D_1) \in \Psi(A,B)$ and $(E_2, \phi_2, D_2) \in \Psi(B,C)$ are such that
\begin{enumerate}
\item for all $x$ in some dense subset of $\phi_1(A)E_1$, the operator
\begin{equation*}
\left[ \left( \begin{array}{cc}
D & 0 \\
0 & D_2
\end{array}
\right), \left(
\begin{array}{cc}
0 & T_x \\
T_x^* & 0
\end{array}
\right)\right]
\end{equation*}
is bounded on $\text{Dom }D \oplus \text{Dom }D_2$;
\item The resolvent of $D$ is compatible with $D_1$: that is, there is a dense submodule $W$ such that $D_1(i\mu + D)^{-1}(i\mu_1 + D_1)^{-1}$ is defined on $W$ for all $\mu, \mu_1 \in \mathbb{R} - \{0\}$;
\item There exists a $c \geq 0$ such that $\langle D_1 x, Dx \rangle + \langle Dx , D_1 x \rangle \geq c \langle x,x\rangle$, for all $x$ in the domain;
\end{enumerate}
where $x \in E_1$ is homogeneous and $T_x: E_2 \rightarrow E$ maps $e \mapsto x \otimes_B e$. Then $(E_1 \otimes_B E_2, \phi_1 \otimes_B \text{Id}, D)$ represents the Kasparov-product of $(E_1, \phi_1, D_1) \in \Psi(A,B)$ and $(E_2, \phi_2, D_2) \in \Psi(B,C)$.
\end{thm}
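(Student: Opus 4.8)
The plan is to reduce the statement to a verification of the three defining conditions of a bounded Kasparov product (Definition \ref{dfn:boundedkasparovproduct}) for the bounded transforms of the three unbounded cycles. Writing $F = D(1+D^2)^{-1/2}$, $F_1 = D_1(1+D_1^2)^{-1/2}$ and $F_2 = D_2(1+D_2^2)^{-1/2}$, the Baaj--Julg proposition guarantees that each of these defines a genuine Kasparov module representing the class of the corresponding element of $\Psi$, and that $(E_1 \otimes_B E_2, \phi_1 \otimes_B \text{Id}, F)$ already lies in $KK(A,C)$ by the first hypothesis. Hence the first bullet of Definition \ref{dfn:boundedkasparovproduct} is automatic, and the problem is to establish the connection condition (second bullet) and the positivity condition (third bullet); once both hold, the uniqueness up to homotopy of the Kasparov product identifies our cycle as the product of $(E_1,\phi_1,D_1)$ and $(E_2,\phi_2,D_2)$.

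For the connection condition, the central analytic tool is the integral representation
\[
(1+D^2)^{-1/2} = \frac{1}{\pi}\int_0^\infty (1 + \lambda + D^2)^{-1}\,\lambda^{-1/2}\,d\lambda,
\]
together with its analogue for $D_2$, which express the bounded transforms through resolvents at the points $i\mu$. First I would rewrite the graded expression $T_x F_2 - (-1)^{\#x} F T_x$ by inserting these integrals, reducing its analysis to the resolvent differences $D(i\mu + D)^{-1}T_x - (-1)^{\#x} T_x D_2(i\mu + D_2)^{-1}$ for $x$ ranging over the dense subset of $\phi_1(A)E_1$ furnished by hypothesis (1). The boundedness of the off-diagonal graded commutator in (1) is precisely what controls these expressions, and combining it with the compactness of the resolvents of $D$ and $D_2$ (which is available once multiplied by the representation factors carried by $x \in \phi_1(A)E_1$) should show that $T_x F_2 - (-1)^{\#x} F T_x$, and by taking adjoints $F_2 T_x^* - (-1)^{\#x} T_x^* F$, are compact; that is, $F$ is an $F_2$-connection. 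Hypothesis (2) enters here to legitimise the manipulations of the unbounded operators on the dense submodule $W$, so that $D_1(i\mu+D)^{-1}(i\mu_1 + D_1)^{-1}$ can be handled without leaving the relevant domains.

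For the positivity condition I would translate the semi-boundedness hypothesis (3). Since $D$ and $D_1$ are odd and self-adjoint, the quantity $\langle D_1 x, D x\rangle + \langle D x, D_1 x\rangle$ is the expectation of the anticommutator of $D_1$ and $D$, so (3) asserts that this anticommutator is bounded below by $c$ on the domain. Passing to the bounded transforms by the same functional calculus, the graded commutator $\phi(a)[F_1 \otimes \text{Id}, F]\phi(a^*)$ is, modulo $\mc{K}(E)$, a positive functional-calculus image of that anticommutator; the lower bound from (3) then forces $\phi(a)[F_1 \otimes \text{Id}, F]\phi(a^*) \ge 0 \bmod \mc{K}(E)$, which is exactly the third bullet of Definition \ref{dfn:boundedkasparovproduct}.

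The hard part will be entirely analytic: making the resolvent and integral estimates of the connection step rigorous on a Hilbert $C^*$-module, where the usual Hilbert-space spectral calculus is unavailable, and in particular upgrading the boundedness supplied by (1) to genuine \emph{compactness} of the connection commutators. Hypothesis (2) is the technical device that makes the domain manipulations valid, so the delicate point is to use it carefully to control the mixed resolvent products on $W$ and to push every estimate through the bounded transform uniformly in the integration parameter $\lambda$.
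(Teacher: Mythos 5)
The paper does not actually prove this statement: it is imported verbatim from Kucerovsky's article (cited as \cite{KucerovskyUnboundedKKmodules}) and used as a black box, so there is no in-paper argument to compare yours against. Measured against Kucerovsky's actual proof, your outline identifies the right global strategy --- pass to the bounded transforms $F$, $F_1$, $F_2$ via Baaj--Julg and verify the Connes--Skandalis characterization of the product, i.e.\ that $F$ is an $F_2$-connection and that the graded commutator $\phi(a)[F_1\otimes\mathrm{Id},F]\phi(a^*)$ is positive modulo compacts --- and the integral formula for $(1+D^2)^{-1/2}$ is indeed the analytic workhorse there.

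However, what you have written is a plan, not a proof, and the gap sits exactly at the two places you yourself flag as ``the hard part.'' First, the connection condition: hypothesis (1) gives \emph{boundedness} of the graded commutator of the diagonal operator with $\begin{pmatrix}0 & T_x\\ T_x^* & 0\end{pmatrix}$, and the entire content of that step is the argument that converts this, after smearing against the resolvent integral and using that $(1+D_2^2)^{-1}\phi_2(b)$ is compact (not the resolvent itself --- on a Hilbert module there is no compact resolvent), into \emph{compactness} of $T_xF_2-(-1)^{\#x}FT_x$; you name this difficulty but do not resolve it. Second, you misassign hypothesis (2): in Kucerovsky's argument the resolvent-compatibility condition is not a domain lubricant for the connection step but is needed, together with the semiboundedness (3), to make sense of and estimate the anticommutator $D_1D+DD_1$ on a common core and thence to derive positivity of $[F_1\otimes\mathrm{Id},F]$ modulo compacts. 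The phrase ``a positive functional-calculus image of that anticommutator'' papers over the fact that $F_1\otimes\mathrm{Id}$ and $F$ are bounded transforms of two \emph{different, non-commuting} unbounded operators, so no single functional calculus relates their commutator to the anticommutator of $D_1$ and $D$; this passage requires a separate resolvent computation that your proposal does not supply.
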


Using Theorem \ref{thm:unboundedkasparovproduct} we show that the spectral triple $\spectraltriple$ of Theorem \ref{thm:spectraltriple1} can be considered as a Kasparov product.

\begin{prp}
Let $B$ be a locally trivial hermitian unital $*$-algebra bundle on a compact Riemannian spin manifold $M$ with fibers isomorphic to some complex $*$-algebra $A$. Let $\nabla^B$ be a hermitian connection on $B$ and $D_B$ the corresponding twisted Dirac operator. Then 
$\spectraltriple$ is an unbounded Kasparov product of $(L^2(M,B), \lambda, 0) \in \Psi(\Gamma(M,B),\cinf{M})$ and $(L^2(M,S), m, \Dslash) \in \Psi(\cinf{M}, \mathbb{C})$, where $\lambda$ is the representation of $\Gamma(M,B)$ on $L^2(M,B)$ induced by left-multiplication and  where $m$ denotes the representation of $\cinf{M}$ on $L^2(M,S)$ by point-wise multiplication with elements in $\Gamma(M,S)$. 
\end{prp}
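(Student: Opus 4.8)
The plan is to verify directly the three hypotheses of Kucerovsky's Theorem \ref{thm:unboundedkasparovproduct} for the candidate product operator $D := D_B$. First I would fix the data: take $A = \Gamma(M,B)$, middle algebra $\cinf{M}$, and $C = \C$, with the outer cycle $(E_1,\phi_1,D_1) = (L^2(M,B),\lambda,0)$ and the inner cycle $(E_2,\phi_2,D_2) = (L^2(M,S),m,\Dslash)$. Using the identification $\Gamma(M,B\otimes S)\cong\Gamma(M,B)\otimes_{\cinf{M}}\Gamma(M,S)$ noted earlier, the internal tensor product $E_1\otimes_{\cinf{M}} E_2$ is the completion of $\Gamma(M,B\otimes S)$, i.e.\ $L^2(M,B\otimes S)$, and $D_B = c\circ(\nabla^B\otimes 1 + 1\otimes\nabla^S)$ is a well-defined self-adjoint regular operator on it by ellipticity and compactness of $M$. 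That $(E_1\otimes_{\cinf{M}} E_2,\lambda\otimes 1, D_B)$ lies in $\Psi(\Gamma(M,B),\C)$ is precisely the content of Theorem \ref{thm:spectraltriple1}, so the first bullet of Definition \ref{dfn:boundedkasparovproduct} is already at hand.

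The heart of the matter is condition (1). For a homogeneous $x = s$ in the dense subset $\Gamma(M,B)\subset L^2(M,B)$ of $\phi_1(A)E_1$ (of degree $\#s = 0$), the map $T_s\colon L^2(M,S)\to L^2(M,B\otimes S)$ sends $\psi\mapsto s\otimes\psi$, and the off-diagonal entries of the $2\times 2$ commutator are $D_B T_s - T_s\Dslash$ and its negative adjoint $\Dslash\, T_s^* - T_s^* D_B$. Since $D_1 = 0$, I would compute the first entry by decomposing $D_B = c\circ(\nabla^B\otimes 1) + 1\otimes\Dslash$, whereupon the second summand cancels $T_s\Dslash$, leaving Clifford multiplication by the covariant derivative of $s$,
\[
(D_B T_s - T_s\Dslash)\psi = c\bigl((\nabla^B s)\otimes\psi\bigr) = \sum_\mu (\nabla^B_\mu s)\otimes\gamma^\mu\psi.
\]
As $\nabla^B s$ is a smooth section of $T^*M\otimes B$ over the compact $M$ and the $\gamma^\mu$ are bounded, this is a bounded operator, and by adjunction so is the lower-left entry; hence the commutator of condition (1) is bounded on $\dom D_B\oplus\dom\Dslash$. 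This is the same estimate that established boundedness of $[D_B,a]$ in Theorem \ref{thm:spectraltriple1}.

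Conditions (2) and (3) are then immediate from $D_1 = 0$. For (2), the operator $D_1(i\mu + D)^{-1}(i\mu_1 + D_1)^{-1}$ is identically zero, so it is trivially defined on any dense submodule $W$; compatibility of the resolvents is automatic. For (3), the form $\langle D_1 x, D x\rangle + \langle D x, D_1 x\rangle$ vanishes identically, so the semiboundedness holds with $c = 0$. With all three hypotheses verified, Theorem \ref{thm:unboundedkasparovproduct} yields that $(E_1\otimes_{\cinf{M}} E_2,\lambda\otimes 1, D_B)$ represents the internal Kasparov product of $(L^2(M,B),\lambda,0)$ and $(L^2(M,S),m,\Dslash)$, recovering the spectral triple $\spectraltriple$ of Theorem \ref{thm:spectraltriple1}.

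I expect the only genuine delicacy to be bookkeeping rather than a real obstacle. The point to handle with care is that the candidate $D_B$ is exactly the connection-lift prescribed by the product construction: the summand $c\circ(\nabla^B\otimes 1)$ is precisely the correction term built from the chosen hermitian connection $\nabla^B$ on $E_1$, which is what makes condition (1) hold with a \emph{bounded} commutator rather than merely a densely defined one. One must also confirm self-adjointness and regularity of $D_B$ (from ellipticity on the compact spin manifold) and that the $\mathbb{Z}_2$-grading conventions are consistent with the even/odd degrees of Definition \ref{dfn:boundedkasparovproduct}: here $\gamma_B = 1\otimes\gamma_5$ is the grading on the product, while $L^2(M,B)$ is trivially graded and $D_1 = 0$, so the outer cycle contributes no degree constraints.
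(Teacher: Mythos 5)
Your proposal is correct and follows essentially the same route as the paper: both reduce to checking condition (1) of Kucerovsky's theorem (conditions (2) and (3) being trivial since $D_1=0$) and identify $D_B T_a - T_a\Dslash$ with Clifford multiplication by $\nabla^B a$, hence bounded. The only cosmetic difference is that the paper verifies the lower-left entry $\Dslash T_a^* - T_a^* D_B$ by an explicit computation (obtaining $\langle c(\nabla^B a), s\rangle\psi$), where you invoke adjunction of the bounded upper-right entry; both are fine.
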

\begin{proof}
Most of the assertions are straightforward to prove. To prove the last statement we will check the first condition of Theorem \ref{thm:unboundedkasparovproduct} since the other two are trivial (because $D_1 = 0$). It suffices to check that
\begin{align*}
D \circ T_a - T_a \Dslash &\in \mc{B}(L^2(M,S), L^2(M, B \otimes S)), \\ 
\Dslash T_a^* -  T_a^* D &\in \mc{B}(L^2(M,B\otimes S), L^2(M,S)) ,
\end{align*}
for all $a \in \Gamma(M,B)$. For the first condition, we have for $\psi \in L^2(M,S)$ that
\begin{equation*}
(D \circ T_a - T_a D_2)(\psi) = D(a \otimes \psi) - a \otimes \Dslash \psi = c(\nabla^B a) \otimes \psi 
\end{equation*}
so that $D \circ T_a - T_a \Dslash$ extends to a bounded operator. Now the second one:
\begin{align*}
(\Dslash T_a^* -  T_a^* D) (s \otimes \psi) &= \Dslash ( \langle a, s \rangle \psi ) - \langle a, s \rangle \Dslash \psi - \langle a, c(\nabla^B s) \rangle \psi \\
&= [\Dslash, \langle a ,s \rangle]\psi - ( [\Dslash, \langle a,s\rangle] - \langle c(\nabla^B a),  s \rangle)\psi \\
&= \langle c(\nabla^B a),  s \rangle \psi, 
\end{align*}
which is again uniformly bounded. This completes the proof.
\end{proof}
Another proof of this fact follows by adopting the direct construction of the unbounded Kasparov products by Mesland \cite{Mesland}. Indeed, the spectral triple $\spectraltriple \in \Psi(\cinf{M}, \C)$ is by construction the internal product of $(\Gamma(M,B), L^2(M,B), 0) \in \Psi(\Gamma(M,B),\cinf{M})$ and $(\cinf{M}, L^2(M,S), \Dslash, \gamma^5) \in \Psi(\cinf{M},\C)$.

\section{Yang--Mills theory as a noncommutative manifold}
\label{sect:ym}
The spectral triple $\spectraltripleJeven$ that we obtained in Theorem \ref{thm:spectraltriple2} will turn out to be the correct triple to describe a non-trivial $PSU(N)$-gauge theory on the manifold $M$ if the fibers of $B$ are taken to be isomorphic to the $*$-algebra $M_N(\mathbb{C})$. Not only does it describe a non-trivial $PSU(N)$-gauge theory, every $PSU(N)$-gauge theory on $M$ is described by such a triple. In this section we will prove these claims by first showing how a principal $PSU(N)$-bundle can be constructed from this spectral triple (in fact, the algebra $\Gamma(M,B)$ is already sufficient for this). As in the topologically trivial case \cite{ChamseddineConnes}, the spectral action applied to this triple will give the Einstein--Yang--Mills action, but now the gauge potential can be interpreted as a connection 1-form on the $PSU(N)$-bundle $P$. In fact, the original algebra bundle $B$ will turn out to be an associated bundle of the principal bundle $P$. From now on, the fibers of $B$ are assumed to be $M_N(\mathbb{C})$.


\subsection{From the spectral triple to principal bundles}
According to Theorem \ref{thm:serreswan2} we are able to reconstruct the unital $*$-algebra bundle $B$ from $\Gamma(M,B)$. 
Note that in this theorem the ($*$-)algebra bundles are not required to be locally trivial as a ($*$-)algebra bundle (they \emph{are} locally trivial as a \emph{vector} bundle). For the rest of this section we assume that $B$ is a locally trivial $*$-algebra bundle with fiber $M_N(\mathbb{C})$.

In order to construct a principal $PSU(N)$-bundle $P$ out of $B$, first of all note that since all $*$-automorphisms of $M_N(\mathbb{C})$ are obtained by conjugation with a unitary element $u \in M_N(\mathbb{C})$ the transition functions of the bundle $\Gamma(M,B)$ have their values in $\text{Ad }U(N) \cong U(N) / Z(U(N)) \cong PSU(N)$. Thus the bundle $B$ provides us with a open covering of $\{U_i\}$ and transition functions $\{g_{ij}\}$ with values in $PSU(N)$. Using the reconstruction theorem for principal bundles we can construct a principal $PSU(N)$-bundle. By construction, the bundle $B$ is an associated bundle to $P$.

Furthermore, for the real and even spectral triple $\spectraltripleJeven$ of Theorem \ref{thm:spectraltriple2} the hermitian connection $\nabla^B$ on the bundle $B$ can locally be written as $\nabla^B = d + \omega$, where $\omega$ is a $\mathfrak{su}(N)$-valued 1-form, (\textit{cf}. Eq. \ref{rmk:derivation1form}). Moreover, the transformation rule for $\omega$ is $\omega_i = g^{-1}_{ij} dg_{ij}  + g^{-1}_{ij} \omega_j g_{ij}$ with $g_{ij}$ the $PSU(N)$-valued transition function of $B$. Comparing this expression with the transformation property of a connection 1-form in Definition \ref{dfn:connection1form} one concludes that the hermitian $*$-algebra connection $\nabla^B$ on $B$ induces a connection 1-form on the principal bundle $P$ constructed in the previous paragraph (and vice versa).

Conversely, given a $PSU(N)$-gauge theory $(P, \omega)$ on some compact Riemannian spin-manifold, then we can construct the locally trivial hermitian $*$-algebra bundle $B:=P \otimes_{PSU(N)} M_N(\mathbb{C})$, where $PSU(N)$ acts on $M_N(\mathbb{C})$ in the usual way. Moreover, the connection $\omega$ on $P$ induces a hermitian $*$-algebra connection on $B$. By following the steps in the previous section it is not difficult to see that the gauge theory $(P_B, \omega_B)$ obtained from this spectral triple $(\Gamma(M,B), L^2(M,B\otimes S), c(\nabla^B \otimes 1 + 1 \otimes \nabla^S), J, \gamma)$ is isomorphic to $(P,\omega)$. 
This is in accordance with the approach to almost noncommutative manifolds taken in \cite{Branimir}.

\begin{prp}
\label{cor:reconstructiontheorempfb}
Let $\spectraltripleJeven$ be as before, with $B$ an endomorphism bundle. Then
there exists a principal $PSU(N)$-bundle $P$ such that $B$ is an associated bundle of $P$, and a connection 1-form $\omega$ on $P$. Moreover, every $PSU(N)$-gauge theory on $M$ is determined by such a spectral triple.
\end{prp}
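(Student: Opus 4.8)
The plan is to set up a correspondence between spectral triples of the form $\spectraltripleJeven$ with $B$ an endomorphism bundle and $PSU(N)$-gauge theories $(P,\omega)$ on $M$, using the Serre--Swan equivalence of Theorem \ref{thm:serreswan2} as the bridge between the algebraic datum $\Gamma(M,B)$ and the geometric datum $B$. First I would invoke Theorem \ref{thm:serreswan2} to recover the unital $*$-algebra bundle $B$, with typical fiber $M_N(\C)$, from the algebra $\Gamma(M,B)$ up to isomorphism. The crucial structural observation is that the structure group of $B$, as a locally trivial $*$-algebra bundle, is the group of $*$-automorphisms of the fiber; by Skolem--Noether every algebra automorphism of $M_N(\C)$ is inner, and imposing compatibility with the involution restricts these to conjugations $\Ad(u)$ by unitaries, whence the structure group is $\Ad\,U(N) \cong U(N)/Z(U(N)) \cong PSU(N)$. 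It is exactly the $*$-structure (equivalently, the presence of $J$) that pins the structure group down to $PSU(N)$ rather than the larger $PGL(N)$ one would get for a bare algebra bundle.

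From here I would extract, on a trivializing cover $\{U_i\}$, a cocycle of $PSU(N)$-valued transition functions $\{g_{ij}\}$, verify the cocycle condition, and apply the reconstruction (clutching) theorem for principal bundles to obtain a principal $PSU(N)$-bundle $P$; by construction $B \cong P \otimes_{PSU(N)} M_N(\C)$ is the associated bundle. For the connection, I would use Remark \ref{rmk:derivation1form}: locally $\nabla^B = d + \omega_i$ with each $\omega_i$ a one-form valued in the Lie algebra of anti-hermitian $*$-derivations of the fiber, which for $M_N(\C)$ is $\Ad(\mathfrak{u}(N)) \cong \mathfrak{su}(N)$. The compatibility of $\nabla^B$ across overlaps forces the transformation rule $\omega_i = g_{ij}^{-1}\,dg_{ij} + g_{ij}^{-1}\omega_j\,g_{ij}$, which is precisely condition \eqref{eq:76} of Definition \ref{dfn:connection1form}; hence the $\{\omega_i\}$ glue to a connection one-form $\omega$ on $P$.

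For the converse (``every $PSU(N)$-gauge theory is determined by such a triple'') I would reverse the construction: given $(P,\omega)$, form $B := P \otimes_{PSU(N)} M_N(\C)$, which is a locally trivial hermitian $*$-algebra bundle since $PSU(N)$ acts by $*$-automorphisms, and let $\omega$ induce a hermitian $*$-algebra connection $\nabla^B$; Theorem \ref{thm:spectraltriple2} then produces the spectral triple $\spectraltripleJeven$. Applying the forward construction to this triple returns a gauge theory with the same trivializing cover and with transition functions and connection forms matching those of $(P,\omega)$ under the associated-bundle identification.

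The hardest part will be this last point: showing the two constructions are mutually inverse up to isomorphism, i.e.\ that the round trip $(P,\omega) \mapsto \spectraltripleJeven \mapsto (P_B, \omega_B)$ yields a gauge theory isomorphic to $(P,\omega)$. This requires checking that the various choices made along the way --- the trivializing cover, the identification of each fiber of $B$ with $M_N(\C)$, and the resulting cocycle --- affect $(P,\omega)$ only up to bundle isomorphism and gauge equivalence, so that the correspondence descends to a genuine bijection between isomorphism classes.
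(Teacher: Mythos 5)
Your proposal is correct and follows essentially the same route as the paper: Serre--Swan recovery of $B$, the observation that $*$-automorphisms of $M_N(\C)$ are unitary conjugations so the transition functions land in $PSU(N)$, the principal-bundle reconstruction theorem, the local forms $\nabla^B = d + \omega_i$ from Remark \ref{rmk:derivation1form} gluing via \eqref{eq:76}, and the converse via $B := P \otimes_{PSU(N)} M_N(\C)$. The paper is in fact terser than you are on the final ``mutually inverse up to isomorphism'' point, which it dispatches with ``it is not difficult to see,'' so your added care there is consistent with, not divergent from, its argument.
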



\subsection{Spectral action}
In this section, we will calculate the spectral action for the real spectral triple of Theorem \ref{thm:spectraltriple2}. We will show that the spectral action applied to the spectral triple $\spectraltripleJeven$ produces the Einstein--Yang--Mills action for a 1-form $A$ that defines a connection 1-form on the $PSU(N)$-bundle $P$. If $B$ is a trivial algebra bundle, this reduces to the result of \cite{ChamseddineConnes}. In fact, much of their local computations can be adopted in this case as well, since locally, the bundle $B$ is trivial. Nevertheless, for completeness we include the computation in the case at hand.

First of all, already in Remark \ref{rmk:derivation1form} we noticed that locally, on some local trivialization $U_i$, $\nabla^B$ is expressed as $d + \omega_i$ where
$\omega$ is an $su(N)$-valued 1-form that acts in the adjoint representation on $\Gamma(M,B)$. Therefore, according to Definition \ref{dfn:connection1form} $\omega$ already
induces a connection 1-form on $P$. To get the full gauge potential we need to take the fluctuation of the Dirac operator into account as well.

Inner fluctuations of the Dirac operator are given by a perturbation term of the form
\begin{equation*}
A = \sum_j a_j[D,b_j], \qquad (a_j, b_j \in \Gamma(M,B)),
\label{eq:nontrivinnerfluc}
\end{equation*}
with the additional condition that $\sum_j a_j[D,b_j]$ is a self-adjoint operator. Explicitly, we have
\begin{equation*}
A = \sum_j c \circ (a_j [\nabla,b_j] \otimes 1),
\end{equation*}
where $c: \Omega^1(M) \otimes_{\cinf{M}} \Gamma(M, B \otimes S) \rightarrow \Gamma(M,B\otimes S)$ is given by
\begin{equation*}
c(\omega \otimes s \otimes  \psi) = s \otimes c(\omega)\psi, \quad (\omega \in \Omega^1(M), s \otimes \psi \in \Gamma(M,B\otimes S)).
\end{equation*}
Also, $\sum_j a_j [\nabla, b_j]$ is an element of $\Gamma(T^*M \otimes B)$.

Locally, on some trivializing neighborhood $U$, the expression in Eq. \eqref{eq:nontrivinnerfluc} can be written as
\begin{equation*}
A= \gamma^{\mu} A_{\mu},
\end{equation*}
where $A_{\mu}$ are the components of the 1-form $\sum_j a_j [\nabla, b_j]$ with values in $\Gamma(M,B)$. Since $A$ is self-adjoint the 1-form $A_{\mu}$ can be
considered as a real 1-form taking values in the hermitian elements $\Gamma(M,B)$.

Similarly, the expression $A + JAJ^*$ is locally written as
\begin{equation*}
\gamma^{\mu} A_{\mu} - \gamma^{\mu} J A_{\mu} J^*,
\end{equation*}
since $\gamma^{\mu}$ anti-commutes with $J$ in 4 dimensions. Writing out the second term gives:
\begin{equation*}
(\gamma^{\mu} J A_{\mu} J^*)(s\otimes \psi) = s A_{\mu} \otimes \gamma^{\mu} \psi,  \quad \forall s \otimes \psi \in \Gamma(M,B\otimes S).
\end{equation*}
so that on this local patch $A + JAJ^*$ can be written as
\begin{equation*}
\gamma^{\mu} \ad A_{\mu}.
\end{equation*}
Consequently, $A + JAJ^*$ eliminates the $i u(1)$-part of $A$, making it natural to impose the uni-modularity condition
\begin{equation*}
\text{Tr } A = 0.
\end{equation*}
Thus, $-i \ad A_{\mu}$ is a one-form on $M$ with values in $\Gamma(M,\ad P)$. We denote this 1-form by $- i \bA^{pert}$; it is defined on
the whole of $M$.

The local and global expression for $D + A + JAJ^*$ are given respectively by
\begin{equation*}
D_A = i\gamma^{\mu} (\nabla^B_{\mu} \otimes 1 + 1 \otimes \nabla^S_{\mu} - i \ad A_{\mu} \otimes 1)  \\
\end{equation*}
and
\begin{equation*}
D_A = ic \circ (1 \otimes \nabla^S + \nabla_B \otimes 1 + \bA^{pert}),
\end{equation*}

On some trivializing neighborhood $U_i$ ($i \in I$) the connection $\nabla^B$ can be expressed as $d + \bA_i^0$ for a unique $su(N)$-valued 1-form $\bA^0_i$
on $U_i$. Thus, on $U_i$ the fluctuated Dirac operator can be rewritten as
\begin{equation*}
D_A = i c \circ (d + 1 \otimes \omega^s + (\bA_i^0 + \bA^{pert}) \otimes 1).
\end{equation*}
We interpret $(\bA_i^0 + \bA^{pert})$ as the full gauge potential on $U_i$; it acts in the adjoint representation on the spinors. The natural action of $g \in \textup{Inn}(\Gamma(M,B) \simeq \Gamma(M,\Ad P)$ by conjugation on $D_A$ induces the familiar gauge transformation:
\begin{equation*}
\mathbb{A}^0 + \mathbb{A}^{pert} \mapsto (g^{-1}\mathbb{A}^0g + g^{-1}(dg)) + g^{-1}\mathbb{A}^{pert}g = g^{-1} (dg) + g ( \mathbb{A}^0 + \mathbb{A}^{pert})g^{-1},
\end{equation*}
where the first two terms are the transformation of $A^0$ under a change of local trivialization, and the last term is the transformation of $A^{pert}$. Since $P$ is an
associated bundle of $B$ it follows from Definition $\ref{dfn:connection1form}$ that $\mathbb{A}^0 + \mathbb{A}^{pert}$ induces a $su(N)$-valued connection 1-form on the
principal $PSU(N)$-bundle $P$ that acts on $\Gamma(M,B)$ in the adjoint representation. Let us summarize what we have obtained so far.

\begin{prp}
Let $\spectraltripleJeven$ and $P$ be as before, so that $P \times_{PSU(N)} M_N(\C) \simeq B$. Then
\begin{enumerate}
\item The group of inner automorphisms $\textup{Inn}(\Gamma(M,B)) \simeq \Gamma(M,\Ad P)$ where $\Ad P = P \times_{PSU(N)} SU(N)$.
\item The inner fluctuations of $D_B$ are parametrized by a section $\bA^{pert}$ of $\Gamma(M,\ad P)$ where $\ad P = P \times_{PSU(N)} su(N)$.
\end{enumerate}
Moreover, the action of $\textup{Inn}(\Gamma(M,B))$ on the inner fluctations $D_B + A + JAJ^{-1}$ by conjugation coincides with the adjoint action of $\Gamma(M,\Ad P)$ on $\Gamma(M,\ad P)$.
\end{prp}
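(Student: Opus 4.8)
The plan is to reduce all three assertions to \emph{fiberwise} statements about the single $*$-algebra $M_N(\C)$ and then to globalize them through the associated-bundle description $B \cong P\times_{PSU(N)} M_N(\C)$ obtained in Proposition \ref{cor:reconstructiontheorempfb}. Concretely, I would fix a trivializing cover $\{U_i\}$ with $PSU(N)$-valued transition functions $g_{ij}$ for $P$ (equivalently, for $B$ as a $*$-algebra bundle), express each object locally on $U_i$, and check that the local data glue according to the conjugation action of $PSU(N)$. Since $\mathrm{Inn}$, the group $SU(N)$, and the Lie algebra $\mathfrak{su}(N)$ all carry compatible $PSU(N)$-conjugation actions, the gluing then identifies the global objects with sections of the corresponding associated bundles.

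For assertion (1) I would first record the fiberwise fact, via Skolem--Noether, that every $*$-automorphism of $M_N(\C)$ is inner and that the inner automorphism group is $U(N)/U(1)\cong SU(N)/Z_N \cong PSU(N)$, each such automorphism being implemented by a special unitary that is unique up to the center $Z_N$. The conjugation action of the structure group $PSU(N)$ on $M_N(\C)$ preserves $SU(N)\subset U(N)$ (the determinant is conjugation-invariant) and hence descends to a well-defined action on $SU(N)$, so that $\Ad P = P\times_{PSU(N)}SU(N)$ is a genuine bundle of groups. I would then define the map $\Gamma(M,\Ad P)\to \mathrm{Inn}(\Gamma(M,B))$ sending a special-unitary section $g$ to the inner automorphism $a\mapsto gag^{*}$, verify it is a group homomorphism, and prove bijectivity by the local-to-global argument above, using Lemma \ref{lem:sectionzero} and the Serre--Swan equivalence of Theorem \ref{thm:serreswan2} to pass between sections and fiber data. \textbf{This step is where I expect the real difficulty.} One must keep careful track of the central scalars --- the $U(1)$ of global phases and the $Z_N$ ambiguity in choosing a special-unitary implementer --- and show that they are exactly accounted for, so that inner automorphisms of the section algebra correspond \emph{bijectively} to sections of the $SU(N)$-group bundle; the subtlety is that the $N$-th-root/determinant bookkeeping that is invisible on a single fiber can carry global (topological) content over $M$.

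Assertion (2) is then essentially a repackaging of the local computation already carried out: writing a fluctuation $A=\sum_j a_j[D_B,b_j]$ locally as $A=\gamma^\mu A_\mu$ with $\Gamma(M,B)$-valued hermitian components, the combination $A+JAJ^{*}$ becomes $\gamma^\mu\,\ad(A_\mu)$, and the unimodularity condition $\tr A=0$ forces the surviving field to take values in $\mathfrak{su}(N)$. I would observe that its transformation rule under a change of local trivialization is precisely the adjoint one, which is exactly the cocycle defining $\ad P = P\times_{PSU(N)}\mathfrak{su}(N)$; hence $\bA^{\mathrm{pert}}$ assembles into a global $\ad P$-valued one-form. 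For the converse (that every such section is realized), I would use a partition of unity subordinate to $\{U_i\}$ together with the fact that the one-forms $a\,[\nabla^B,b]$ with $a,b\in\Gamma(M,B)$ locally span all $\Gamma(M,B)$-valued one-forms, followed by the unimodular (traceless) projection.

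Finally, for the \emph{Moreover} statement I would combine (1) and (2): letting $g\in\mathrm{Inn}(\Gamma(M,B))\cong\Gamma(M,\Ad P)$ act on $D_A=D_B+A+JAJ^{-1}$ by conjugation through $\psi\mapsto gJgJ^{*}\psi$, the affine transformation rule $\bA^0+\bA^{\mathrm{pert}}\mapsto g^{-1}dg+g(\bA^0+\bA^{\mathrm{pert}})g^{-1}$ established above shows that the homogeneous part acting on $\bA^{\mathrm{pert}}$ is the conjugation $\bA^{\mathrm{pert}}\mapsto g\,\bA^{\mathrm{pert}}\,g^{-1}$. Since the center $Z_N\subset SU(N)$ acts trivially in the adjoint representation on $\mathfrak{su}(N)$, this conjugation is well defined on $\ad P$ and is exactly the adjoint action of $\Gamma(M,\Ad P)$ on $\Gamma(M,\ad P)$, as claimed. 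I expect assertions (2) and (3) to follow with only bookkeeping once (1) is in place, so the crux of the proof is the global identification in (1).
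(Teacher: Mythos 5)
Your plan for assertion (2) and for the ``Moreover'' clause coincides with what the paper actually does: the paper's ``proof'' of this proposition is nothing more than the local computation immediately preceding it, which writes $A=\gamma^\mu A_\mu$, shows $A+JAJ^*=\gamma^\mu \ad A_\mu$ (killing the $u(1)$ part and motivating $\tr A=0$), checks that the resulting $\bA^{pert}$ transforms by the adjoint cocycle and hence is a global section of $\ad P$, and derives the affine gauge-transformation rule. So for those two parts your proposal is essentially the paper's argument, made slightly more explicit.

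The genuine problem is assertion (1), which you correctly single out as the crux and which the paper in fact never proves (it is asserted in passing). Your proposed resolution --- define $\Gamma(M,\Ad P)\to \textup{Inn}(\Gamma(M,B))$ by conjugation and ``prove bijectivity by the local-to-global argument'' --- cannot be carried out as stated, because the map is in general neither injective nor surjective. Its kernel consists of the central sections of $\Ad P$, i.e.\ the locally constant $\mathbb{Z}/N$-valued functions on $M$ (a copy of $\mathbb{Z}/N$ for connected $M$), since these act trivially in the adjoint representation. Surjectivity fails because an inner automorphism $\Ad(u)$ with $u$ a unitary section of $B$ is implemented by a special-unitary section only if the globally defined function $\det u\colon M\to U(1)$ admits a smooth $N$-th root; the obstruction is a class in $H^1(M,\mathbb{Z}/N)$ arising from patching the local branches of $(\det u)^{-1/N}$, and it is nonzero already for $M=S^1$, $B$ trivial, $u=\mathrm{diag}(e^{i\theta},1,\dots,1)$ with $N\geq 2$. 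Any two unitaries implementing the same automorphism differ by a central ($U(1)$-valued) function, so this obstruction cannot be evaded by choosing a different implementer. Thus the ``$N$-th-root/determinant bookkeeping'' you flag as carrying global topological content is not a subtlety to be tracked through the patching argument but an actual failure of the claimed bijection; at best one obtains an isomorphism after quotienting by these central and cohomological corrections (equivalently, $\textup{Inn}(\Gamma(M,B))\cong \Gamma(M,P\times_{PSU(N)}U(N))/C^\infty(M,U(1))$, which maps to but need not equal $\Gamma(M,\Ad P)$ modulo its center). Your proposal would stall exactly at the step you yourself identify as the hard one, and no amount of Serre--Swan or Lemma~\ref{lem:sectionzero} will close it.
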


Let us now proceed to compute the spectral action for these inner fluctuations. First, we recall some results on heat kernel expansions and Seeley--DeWitt coefficients, which will be useful later on; for more details we refer to \cite{Gilkey}. 

If $V$ is a vector bundle on a compact Riemannian manifold $(M, g)$ and if $Q:C^{\infty}(V)\to C^{\infty}(V)$ is a second-order elliptic differential operator of the form
\begin{equation*}
 Q = - \big(g^{\mu\nu}\partial_{\mu}\partial_{\nu} + K^{\mu}\partial_{\mu} + L)\label{eq:elliptic} 
\end{equation*}
with $K^{\mu}, L \in \Gamma(\textup{End}(V))$, then there exist a unique connection $\nabla$ and an endomorphism $E$ on $V$ such that
$Q = \nabla\nabla^* - E$. 
In this situation we can make an asymptotic expansion (as $t \to 0$) of the trace of the operator $e^{-tQ}$ in powers of $t$:
\begin{equation*}
\tr\,e^{-tQ} \sim \sum_{n \geq 0}t^{(n-m)/2}a_n(Q),\qquad a_n(Q) := \int_{M}a_n(x, Q)\sqrt{g}d^m x\label{eq:gilkey},
\end{equation*}
where $m$ is the dimension of $M$ and the coefficients $a_n(x, Q)$ are called the \emph{Seeley--DeWitt coefficients}. It turns out \cite[Theorem~4.8.16]{Gilkey} that $a_n(x, Q) = 0$ for $n$ odd and that the first three even coefficients are given (modulo boundary terms) by
\begin{align*}
a_0(x,Q) &= (4 \pi)^{-m/2} \text{Tr}(\text{Id})  \\
a_2(x,Q) &= (4 \pi)^{-m/2} \text{Tr}(-\frac{R}{6}\text{Id} + E)  \\
a_4(x,Q) &= (4 \pi)^{-m/2} \frac{1}{360}\text{Tr}\left(
5R^2 - 2R^{\mu \nu} R_{\mu \nu} + 2R_{\mu \nu \rho \sigma} R^{\mu \nu \rho \sigma} - 60RE 
+ 180E^2 + 
30\Omega_{\mu \nu} \Omega^{\mu \nu}
\right),
\end{align*}
where $\Omega_{\mu \nu}$ is the curvature of the connection $\nabla$. 

\medskip

This can be used in the computation of the spectral action as follows. Assume that the inner fluctuations give rise to an operator $D_A$ for which $D_A^2$ is of the form \eqref{eq:elliptic} on some vector bundle $V$ on a compact Riemannian manifold $M$. Then, on writing $f$ as a Laplace transform, we obtain
\begin{equation}
\label{eq:laplace}
f (D_A/\Lambda) = \int_{t>0} \tilde g(t) e^{-t D_A^2/\Lambda^2} ~ d t.
\end{equation}
One calculates that in 4 dimensions the heat expansion (up
to order $n=4$) of the spectral action (\ref{eq:spectralaction}) is given by
\begin{align*}
\text{Tr}\left(f(D/\Lambda)\right) &\sim f(0) \Lambda^0 a_4(D^2) + \sum_{n=0,2} \Lambda^{4-n} a_n(D^2) \frac{1}{\Gamma(\frac{4-n}{2})} \int_0^{\inf} k(v) v^{\frac{4-n}{2}-1}
dv \\
&= f(0) \Lambda^0 a_4(D^2) + 2 f_2 \Lambda^2 a_2(D^2) + 2\Lambda^4 f_4 a_0(D^2).
\label{eq:Diracexpansion}
\end{align*}
where the $f_k$ are moments of the function $f$: 
\begin{align*}
	f_{k} := \int_{0}^{\infty} f(w)w^{k-1}dw; \qquad (k>0) \nonumber. 
\end{align*}

\begin{lem}
\label{lem:ED2}
For the spectral triple $\spectraltripleJeven$, the square of the fluctuated Dirac operator $D_A^2$ is locally of the form $-g_{\mu \nu} \partial_{\mu} \partial_{\nu} + K_{\mu} \partial_{\mu} + L$
and we have the following expressions for $\Omega_{\mu\nu}$ and $E$:
\begin{eqnarray}
E = -\frac{1}{4} R \otimes 1_{N^2} - \sum_{\mu < \nu} \gamma^{\mu} \gamma^{\nu} \otimes F_{\mu \nu} \nonumber \\
\Omega_{\mu \nu} = \frac{1}{4} R^{ab}_{\mu \nu} \gamma_{ab} \otimes 1_{n^2} + id_4 \otimes F_{\mu \nu} \nonumber,
\end{eqnarray}
where $F_{\mu \nu}$ is the curvature of the connection $\nabla^B + \bA^{pert}$.
\end{lem}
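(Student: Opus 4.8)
The plan is to recognize $D_A^2$ as a generalized Laplacian and to read off $E$ and $\Omega_{\mu\nu}$ from the Lichnerowicz--Weitzenb\"ock formula for the twisted Dirac operator $D_A$. Locally, as computed just before the statement, $D_A = ic\circ\nabla$ where $\nabla = 1\otimes\nabla^S + (\nabla^B + \bA^{pert})\otimes 1$ is the tensor-product connection on $B\otimes S$ assembled from the spin connection $\nabla^S$ on $S$ and the \emph{total} connection $\nabla^B + \bA^{pert}$ on $B$, whose curvature is the two-form $F_{\mu\nu}$ appearing in the statement. Because the fiber of $B$ is fixed to be $M_N(\C)$ and $\bA^{pert}$ acts by $\ad$, this total connection is again a hermitian $*$-algebra connection, so $\nabla$ is a Clifford connection and the Lichnerowicz machinery applies verbatim.

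First I would check that $D_A^2$ is of Laplace type, i.e.\ locally of the form $-g_{\mu\nu}\partial_\mu\partial_\nu + K_\mu\partial_\mu + L$. This is immediate from the Clifford relation $\gamma^\mu\gamma^\nu + \gamma^\nu\gamma^\mu = 2g^{\mu\nu}$: the principal symbol of $D_A^2$ is $g^{\mu\nu}\xi_\mu\xi_\nu$ times the identity, so the second-order part is the scalar Laplacian and the lower-order terms supply $K_\mu$ and $L$. By the result recalled before Lemma \ref{lem:ED2}, there is then a unique connection and endomorphism $E$ with $D_A^2 = \nabla^*\nabla - E$; the content of the lemma is that this connection is precisely the tensor-product connection above and that $E$ has the stated form.

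The central step is the Lichnerowicz identity $D_A^2 = \nabla^*\nabla + \tfrac14 R + c(F)$, where $R$ is the scalar curvature of $M$ and $c(F) = \sum_{\mu<\nu}\gamma^\mu\gamma^\nu\otimes F_{\mu\nu}$ is the Clifford contraction of the twisting curvature. Since $B$ is locally trivial with fiber $M_N(\C)$, this reduces to the standard local computation for a twisted Dirac operator, in which the curvature of $\nabla$ is split into its spinorial part $\tfrac14 R^{ab}_{\mu\nu}\gamma_{ab}$ and the twisting part $F_{\mu\nu}$ acting on the $M_N(\C)$-factor. Comparing $D_A^2 = \nabla^*\nabla + \tfrac14 R + c(F)$ with $D_A^2 = \nabla^*\nabla - E$ immediately gives $E = -\tfrac14 R\otimes 1_{N^2} - \sum_{\mu<\nu}\gamma^\mu\gamma^\nu\otimes F_{\mu\nu}$, as claimed.

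Finally, $\Omega_{\mu\nu}$ is the curvature of $\nabla$, and since $\nabla$ is a tensor-product connection its curvature is the sum of the curvatures of the two factors: the spin connection contributes $\tfrac14 R^{ab}_{\mu\nu}\gamma_{ab}\otimes 1_{N^2}$ and the total connection on $B$ contributes $1_4\otimes F_{\mu\nu}$, yielding the asserted expression. I expect the only real obstacle to be bookkeeping: tracking the signs and normalizations hidden in the factor of $i$ in $D_A$, in the Clifford convention for $c$, and in the definition of $F_{\mu\nu}$, so that the spinorial and twisting curvature contributions assemble with exactly the coefficients stated, and checking carefully that $\nabla^B + \bA^{pert}$ is genuinely a hermitian connection on $B$ so that the Weitzenb\"ock formula can be invoked unchanged.
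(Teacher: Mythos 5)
Your proposal is correct and matches the paper's (omitted) argument: the paper states Lemma \ref{lem:ED2} without proof, deferring to the local computation of Chamseddine--Connes in a trivialization of $B$, which is exactly the Lichnerowicz--Weitzenb\"ock identity for the twisted Dirac operator that you invoke, compared against Gilkey's unique decomposition $Q=\nabla\nabla^*-E$. The convention bookkeeping you flag is indeed the only substantive work, together with the observation (also used implicitly in the paper) that $\nabla^B+\bA^{pert}$ is again a hermitian $*$-algebra connection on $B$ whose curvature is the $F_{\mu\nu}$ of the statement.
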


This result allows us to compute the bosonic spectral action for the fluctuated Dirac operator $D_A$, essentially reducing the computation in terms of a local trivialization to the trivial case of \cite{ChamseddineConnes} with the following result.

\begin{thm}
\label{thm:actionnontriv}
For the spectral triple $\spectraltripleJeven$, the spectral action equals the Yang--Mills action for $\nabla^B + \bA^{pert}$ minimally coupled to gravity:
\begin{equation*}
\tr \left( f(D_A/\Lambda) \right) \sim \frac{f(0)}{24 \pi^2} \int_M \tr F_{\mu\nu} F^{\mu\nu} \sqrt{g} d^4x + \frac{1}{(4\pi)^2} \int_M \mathcal{L}(g_{\mu \nu}) \sqrt{g} d^4x + \mathcal{O}(\Lambda^{-2}),
\end{equation*}
where $\mathcal{L}(g^{\mu \nu})$ is given by
\begin{equation*}
\mathcal{L}(g^{\mu \nu}) = 2N^2 \Lambda^4 f_4 + \frac{N^2}{6} \Lambda^2 f_2 R  - \frac{N^2 f_0}{80} C_{\mu \nu \rho \sigma} C^{\mu \nu \rho
\sigma},
\end{equation*}
ignoring topological and boundary terms. Here $C$ denotes the Weyl-tensor and $f_i$ are the $i$'th moments of the function $f$. 
\end{thm}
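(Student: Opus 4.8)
The plan is to feed the two local data of Lemma \ref{lem:ED2} into the Seeley--DeWitt expansion recorded above, compute the fibrewise traces over $\mathbb{C}^4 \otimes M_N(\C)$, and recognise the resulting curvature invariants. Since $D_A^2$ is of Laplace type by Lemma \ref{lem:ED2}, writing $f$ as a Laplace transform as in \eqref{eq:laplace} reduces the bosonic action in four dimensions to
\[
\tr\left(f(D_A/\Lambda)\right) \sim f(0)\, a_4(D_A^2) + 2 f_2 \Lambda^2 a_2(D_A^2) + 2 f_4 \Lambda^4 a_0(D_A^2),
\]
so that everything is governed by the endomorphism $E$ and the curvature $\Omega_{\mu\nu}$ of Lemma \ref{lem:ED2}. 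The first point to stress is that, although $E$ and $\Omega_{\mu\nu}$ are written down only in a local trivialisation, the coefficients $a_n(x,D_A^2)$ are \emph{local} spectral invariants built from the jets of the symbol of the globally defined operator $D_A^2$; hence they patch to global densities, and the computation, though carried out on a trivialising chart as in \cite{ChamseddineConnes}, yields a well-defined integrand on all of $M$. In particular $\tr(F_{\mu\nu}F^{\mu\nu})$ is a globally defined gauge-invariant density, since $F$ is the curvature of $\nabla^B + \bA^{pert}$, i.e. a two-form valued in $\Gamma(M,\ad P)$.

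I would then evaluate the three coefficients in turn. For $a_0$ one has $\tr(\mathrm{Id}) = 4N^2$, producing the cosmological term proportional to $N^2\Lambda^4 f_4$. For $a_2$ the relevant quantity is $\tr(-\tfrac{R}{6}\mathrm{Id} + E)$; the Clifford part $\sum_{\mu<\nu}\gamma^\mu\gamma^\nu \otimes F_{\mu\nu}$ of $E$ is traceless in the spinor indices (as $\tr_S(\gamma^\mu\gamma^\nu)=0$ for $\mu\neq\nu$), so only the scalar-curvature piece survives and one obtains the Einstein--Hilbert term proportional to $\tfrac{N^2}{6}\Lambda^2 f_2 R$. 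The substantial work is in $a_4$, where the two genuinely new features appear.

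In $a_4$ I would separate the purely gravitational contribution from the Yang--Mills one. The Yang--Mills term is extracted from the $180\,E^2$ and $30\,\Omega_{\mu\nu}\Omega^{\mu\nu}$ summands: inserting $E = -\tfrac14 R\otimes 1_{N^2} - \sum_{\mu<\nu}\gamma^\mu\gamma^\nu\otimes F_{\mu\nu}$ and $\Omega_{\mu\nu} = \tfrac14 R^{ab}_{\mu\nu}\gamma_{ab}\otimes 1_{N^2} + \mathrm{id}_4\otimes F_{\mu\nu}$, all cross terms between the spinor-curvature pieces and the $F_{\mu\nu}$ pieces vanish because $\tr_S(\gamma^\mu\gamma^\nu)=0$ ($\mu\neq\nu$) and $\tr_S(\gamma_{ab})=0$. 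The $F$-dependence therefore enters only through $\tr_S\bigl(\sum_{\mu<\nu}\gamma^\mu\gamma^\nu \sum_{\alpha<\beta}\gamma^\alpha\gamma^\beta\bigr)$ and $\tr_S(\mathrm{id}_4)=4$; carrying out these spinor traces and combining the two summands against the prefactor $(4\pi)^{-2}/360$ collapses to $\tfrac{f(0)}{24\pi^2}\tr(F_{\mu\nu}F^{\mu\nu})$ (the sign being fixed by the anti-hermiticity of $F$). The remaining, purely gravitational, part of $a_4$ is the standard Chamseddine--Connes reorganisation: the combination $5R^2 - 2R_{\mu\nu}R^{\mu\nu} + 2R_{\mu\nu\rho\sigma}R^{\mu\nu\rho\sigma}$ together with $-60RE$, the scalar part of $180E^2$, and the spinor-curvature part of $30\Omega^2$ is rewritten, in dimension four, as a multiple of the Weyl invariant $C_{\mu\nu\rho\sigma}C^{\mu\nu\rho\sigma}$ plus the Gauss--Bonnet (Euler) density and a $\Box R$ total derivative; discarding the latter two as topological and boundary terms leaves $-\tfrac{N^2 f_0}{80}C_{\mu\nu\rho\sigma}C^{\mu\nu\rho\sigma}$. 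Assembling the three coefficients and integrating against $\sqrt{g}\,d^4x$ gives the stated expression.

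The hard part will be the bookkeeping inside $a_4$: one must keep the spinor and matrix traces separate, correctly track the factor-of-two conventions relating $\sum_{\mu<\nu}$ to $F_{\mu\nu}F^{\mu\nu}$, and verify that the spurious gravitational invariants really do organise into Weyl-squared modulo the topological Euler term. Because $B$ is only locally trivial, I would also make explicit at the end that the discarded Euler and $\Box R$ contributions integrate to a bundle-independent constant (respectively vanish), so that dropping them is legitimate globally on $M$ and not merely on a single chart.
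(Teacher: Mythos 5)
Your proposal is correct and follows essentially the same route as the paper, which likewise feeds the expressions for $E$ and $\Omega_{\mu\nu}$ from Lemma \ref{lem:ED2} into the Seeley--DeWitt expansion and reduces the local computation to that of \cite{ChamseddineConnes}. In fact you supply more detail than the paper does (the explicit spinor traces and the remark that the locally computed Seeley--DeWitt densities patch to global invariants), both of which are consistent with the stated result.
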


\section{Conclusions and outlook}
We have generalized the noncommutative description of the Einstein--Yang--Mills system by Chamseddine and Connes \cite{CC97} to the case where the principal bundle describing the gauge field is non-trivial. We have obtained a spectral triple from an algebra bundle and related its construction to the internal Kasparov product in unbounded KK-theory. If the typical fiber of the algebra bundle is $M_N(\C)$, we have showed that its internal fluctuations are parametrized by a $PSU(N)$-gauge field. In fact, we reconstructed a $PSU(N)$-principal bundle for which the algebra bundle is an associated bundle, and on which the gauge field defines a connection one-form. Finally, we have applied the spectral action principle to these inner fluctuations of the spectral triple and derived the Yang--Mills action for a $PSU(N)$-gauge field, minimally coupled to gravity. 

A natural question that arises in this topologically non-trivial context is how to incorporate, besides the Yang--Mills action, a topological action functional. Given an (even) spectral triple $(\A,\H,D,\gamma)$, we introduce -- besides the spectral action \eqref{eq:spectralaction} -- an invariant by 
\begin{equation}
\label{eq:topspectralaction}
S_{\top} [A] = \tr \left( \gamma f(D_A/\Lambda) \right).
\end{equation}
We will call this the {\it topological spectral action}.
It is clearly invariant under the action of the group of unitaries in the algebra $\A$, acting on $\gamma$ by conjugation. 

If we again write $f$ as a Laplace transform \eqref{eq:laplace} and use the McKean--Singer formula,
$$
\tr e^{-t D_A^2} = \ind D_A,
$$
then we can prove that asymptotically
$$
S_{\top} [A] \sim f(0) \ind D_A
$$
In our case of interest, {\it i.e.} the setting of Theorem \ref{thm:actionnontriv}, we thus find with the Atiyah--Singer index theorem an extra contribution of the form
$$
S_{\top} [A] \sim \frac{f(0)}{(2 \pi i)^{n/2}} \int_M \hat{A}(M) \ch (B).
$$
in terms of the $\hat A$ genus of $M$ and the Chern character of the algebra bundle $B$.


\begin{thebibliography}{99}

\bibitem{Ati79}
M.~F. Atiyah.
\newblock {\em The Geometry of {Y}ang-{M}ills Fields}.
\newblock Fermi Lectures. Scuola Normale, Pisa, 1979.

\bibitem{BaajJulg}
S.~Baaj and P.~Julg.
\newblock Th\'eorie bivariante de Kasparov et op\'erateurs non born\'es dans
  les $C^*$-modules hilbertiens.
\newblock {\em Acad. Sci. Paris} 296 (1983)  875--878.

\bibitem{Blackadar}
B.~Blackadar.
\newblock {\em K-theory for Operator Algebras}.
\newblock Cambridge University Press, 1998.

\bibitem{Bleecker}
D.~Bleecker.
\newblock {\em Gauge theories and variational principles}.
\newblock Addison-Wesley Publishing Company, 1981.

\bibitem{Branimir}
B.~\'Ca\'ci\'c.
\newblock Almost-commutative spectral triples and noncommutative-geometric
  field theory.
\newblock Poster at Bonn International Graduate School in Mathematics, 2009.

\bibitem{CC97}
A.~H. Chamseddine and A.~Connes.
\newblock The spectral action principle.
\newblock {\em Commun. Math. Phys.} 186 (1997)  731--750.

\bibitem{ChamseddineConnes}
A.~Chamseddine and A.~Connes.
\newblock The spectral action principle.
\newblock {\em Comm. Math. Phys.} 186 (1997)  731--750.

\bibitem{CCM07}
A.~H. Chamseddine, A.~Connes, and M.~Marcolli.
\newblock {Gravity and the Standard Model with neutrino mixing}.
\newblock {\em Adv. Theor. Math. Phys.} 11 (2007)  991--1089.

\bibitem{Connes}
A.~Connes.
\newblock {\em Noncommutative geometry}.
\newblock Academic Press, London and San Diego, 1994.

\bibitem{ConnesGravity}
A.~Connes.
\newblock Gravity coupled with matter and the foundation of noncommutative
  geometry.
\newblock {\em Comm. Math. Phys.} 182 (1996).

\bibitem{ConnesMarcolli}
A.~Connes and M.~Marcolli.
\newblock {\em Noncommutative Geometry, Quantum Fields and Motives}.
\newblock AMS, Providence, 2008.

\bibitem{Gilkey}
P.~Gilkey.
\newblock {\em Invariance theory, the heat equation and the Atiyah-Singer index
  theorem}.
\newblock CRC Press, 1984.

\bibitem{Varilly}
J.~Gracia-Bondia, J.~V\'arilly, and H.~Figueroa.
\newblock {\em Elements of noncommutative geometry}.
\newblock Birkhauser Boston, 2001.

\bibitem{KasparovKK}
G.~Kasparov.
\newblock The operator K-functor and extensions of $C^*$-algebras.
\newblock {\em Izv. Akad. Nauk. SSSR} 44 (1980)  571--636.

\bibitem{Kasparov}
G.~Kasparov.
\newblock Equivariant KK-theory and the Novikov conjecture.
\newblock {\em Invent. Math.} 91 (1988)  147--201.

\bibitem{KucerovskyUnboundedKKmodules}
D.~Kucerovsky.
\newblock The KK-product of unbounded modules.
\newblock {\em K-theory} 11 (1997)  17--34.

\bibitem{Mesland}
B.~Mesland.
\newblock {\em Unbounded Bivariant K-Theory and Correspondences in
  Noncommutative Geometry}.
\newblock PhD thesis, Universit\"at Bonn, 2009.

\bibitem{Nestruev}
J.~Nestruev.
\newblock {\em Smooth manifolds and observables}.
\newblock Springer, 2003.

\bibitem{Rieffel}
M.~Rieffel.
\newblock Morita equivalence for ${C}^*$-algebras and ${W}^*$-algebras.
\newblock {\em J. Pure Appl. Algebra} 5 (1974)  51--96.

\bibitem{Swan}
R.~Swan.
\newblock Vector bundles and projective modules.
\newblock {\em Transactions of the American Mathematical Society} 105 (1962)
  264--277.

\end{thebibliography}
\newcommand{\noopsort}[1]{}\def\cprime{$'$}

\end{document}